
\documentclass[journal,draft,onecolumn,12pt]{IEEEtran}
\usepackage{amsfonts,color,morefloats}
\usepackage{amssymb,amsmath,latexsym,amsthm}

\newcommand{\rank}{{\mathrm{Rank}}}

\newcommand{\gf}{{\mathrm{GF}}}

\newcommand{\C}{{\mathcal{C}}}

\newcommand{\bc}{{\mathbf{c}}}
\newcommand{\bg}{{\mathbf{g}}}
\newcommand{\bu}{{\mathbf{u}}}

\newcommand{\bone}{{\mathbf{1}}}
\newcommand{\m}{\mathbb{M}}

\newtheorem{theorem}{Theorem}
\newtheorem{lemma}[theorem]{Lemma}

\newtheorem{corollary}[theorem]{Corollary}
\newtheorem{conjecture}{Conjecture}
\newtheorem{problem}{Open Problem}
\newtheorem{definition}{Definition}
\newtheorem{example}{Example}
\newtheorem{remark}{Remark}

\setcounter{page}{1}

\begin{document}

\title{Extended codes and deep holes of MDS codes \thanks{
Y. Wu's research was sponsored by the National Natural Science Foundation of China under Grant Number 12101326 and 62372247 and the
Natural Science Foundation of Jiangsu Province under Grant Number BK20210575. 
C. Ding's research was supported by The Hong Kong Research Grants Council, Proj. No. $16301523$. 
}
}

\author{Yansheng Wu,\thanks{Y, Wu is with the School of Computer Science, Nanjing University of Posts and Telecommunications, Nanjing
210023, China. Email: 
yanshengwu@njupt.edu.cn} 
\and Cunsheng Ding,\thanks{C. Ding is with the Department of Computer Science
                           and Engineering, The Hong Kong University of Science and Technology,
Clear Water Bay, Kowloon, Hong Kong, China. Email: cding@ust.hk}
\and Tingfang Chen\thanks{T. Chen is with the Department of Computer Science
                           and Engineering, The Hong Kong University of Science and Technology,
Clear Water Bay, Kowloon, Hong Kong, China. Email: tchenba@ust.hk}

}

\date{\today}
\maketitle

\begin{abstract} For a given linear code $\C$ of length $n$ over $\gf(q)$ and a nonzero vector $\bu$ in $\gf(q)^n$, Sun, Ding and Chen  defined an extended linear code $\overline{\C}(\bu)$ of $\C$, which is a generalisation of the classical extended 
code $\overline{\C}(-\bone)$ of $\C$ and called the second kind of an extended code of $\C$ (see arXiv:2307.04076 and arXiv:2307.08053). They developed some general theory of the extended codes   $\overline{\C}(\bu)$   and 
studied  the extended codes  $\overline{\C}(\bu)$ of several families of linear codes, including cyclic codes, projective two-weight codes, nonbinary Hamming codes, and a family of  reversible MDS cyclic codes. The objective of this paper is to investigate the extended codes  $\overline{\C}(\bu)$ of MDS codes $\C$ over finite fields. The main result 
of this paper is that the extended code $\overline{\C}(\bu)$ of an MDS $[n,k]$ code $\C$ remains MDS if and only if the covering radius $\rho(\mathcal{C}^{\bot})=k$ and the vector $\bu$ is a deep hole of the dual code $\C^\perp$.  As applications of this main result, the extended codes of the GRS codes and extended GRS codes are investigated 
and the covering radii of several families of MDS codes are determined. 
\end{abstract}

\begin{IEEEkeywords}
Covering radius, extended code, deep hole, GRS code, extended GRS code, MDS code  
\end{IEEEkeywords}

\section{Introduction}\label{sec-intro} 

 Let $n$ be a positive integer and $q$ be a power of a prime. 
An $[n,k,d]$ linear code $\C$ over $\gf(q)$ is a $k$-dimensional subspace of $\gf(q)^n$ with minimum 
Hamming distance $d$. The minimum distance $d$ of
a linear code $\mathcal{C}$ is upper bounded by $d\le n-k+1$, which is called the {\em Singleton bound}.
If $d= n-k+1$, then $\mathcal{C}$ is called a {\em maximum distance separable} (MDS) code. Namely, an MDS code has the greatest error correcting capability
 when its length and dimension are fixed.  MDS codes have wide applications in communication systems, cryptography  ad data storage systems. There are many known constructions for MDS codes, including 
 the {\em generalized Reed-Solomon} (GRS) codes and extended GRS codes \cite{RS}, 
 and those from arcs in projective geometry \cite{MS}, circulant matrices \cite{RL}, and Hankel matrices \cite{RS2}.
 
In coding theory, there are several methods to construct new linear codes with interesting parameters and properties from old ones. Among them are the shortening, puncturing,  augmenting and extending techniques. There are different ways to extend an $[n, k,d]$ linear code $\C$ over $\gf(q)$ into an $[n+1, k,d']$ linear code $\C$ over $\gf(q)$. Two of them are outlined below. 

For a given $[n, k,d]$ linear code $\C$ over $\gf(q)$, the first extending method goes as follows. 
Select a $k\times n$ generator matrix $G$ of $\C$ and a $k \times 1 $ vector $\bg$ in $\gf(q)^k$. 
Then construct a $k\times (n+1)$ matrix 
\begin{eqnarray}\label{eqn-tildeGv}
\widetilde{G}(G, \bg)=(G|\bg). 
\end{eqnarray} 
Let $\widetilde{\C}(G, \bg)$ be the linear code with generator matrix $\widetilde{G}(G,\bg)$. By definition, 
the extended code $\widetilde{\C}(G, \bg)$ has parameters $[n+1, k,\tilde{d}]$, where $\tilde{d}=d$ or 
$\tilde{d}=d+1$. It is easy to notice the following:
\begin{itemize}
\item The exact value of $\tilde{d}$ depends on both $G$ and $\bg$. 
\item The weight distribution of the extended code $\widetilde{\C}(G, \bg)$ depends on both $G$ and $\bg$. 
\item Two extended codes $\widetilde{\C}(G_1, \bg)$ and $\widetilde{\C}(G_2, \bg)$ may not be equivalent 
for a fixed nonzero vector $\bg$, where $G_1$ and $G_2$ are two $k \times n$ generator matrices of $\C$.
\end{itemize} 
There are a huge number of choices of a generator matrix $G$ of $\C$ and $q^k-1$ choices of a nonzero  vector $\bg$. 
Hence, by this extending technique a given $[n, k,d]$ linear code $\C$ over $\gf(q)$ could be extended 
into a huge number of codes $\widetilde{\C}(G, \bg)$. Seroussi and Roth \cite{SR}  proved that a GRS code can be extended by this extending technique while preserving the MDS property if and only if the resulting extended code is  a GRS code or extended GRS code.
This extending technique is not well studied in the 
literature and has been applied only to MDS codes and a few other types of linear 
codes. The extended GRS codes and Roth-Lempel codes are such extended codes.  To distinguish these  
extended codes $\widetilde{\C}(G, \bg)$ from other types of extended codes of $\C$, we call those codes 
the \emph{first kind of extended codes}.

For a given $[n, k,d]$ linear code $\C$ over $\gf(q)$, the second extending method goes as follows \cite{SDC}. 
Let ${\bf u}=(u_1, u_2, \ldots, u_n)\in \gf(q)^n$ be any nonzero vector. Define an $[n+1,k, \overline{d}]$ code $\overline{\mathcal C}({\bf u})$ over $\gf(q)$ by 
\begin{equation}\label{eq1}
\overline{\mathcal C}({\bf u})=\bigg\{  (c_1, \ldots, c_n,c_{n+1}): (c_1, \ldots, c_n)\in \mathcal C,c_{n+1}=\sum_{i=1}^nu_ic_i\bigg\},
\end{equation}
where $\overline{d} = d$ or $\overline{d} = d +1$. All extended codes $\overline{\mathcal C} (a{\bf 1})$ with $a \neq 0$ are called the standardly extended codes and all the other extended codes $\overline{\mathcal C} ({\bf u})$ with ${\bf u} \neq a{\bf 1}$ for all $a \in \gf(q)^*$  are called nonstandardly extended codes.
It is easy to observe the following:
\begin{itemize}
\item If ${\bf u}\in \mathcal C^{\bot}$ (i.e., the dual of $\C$), then $c_{n+1}=0$ for all codewords 
$\bc=(c_1, \ldots, c_{n+1})$ of $ \overline{\C}(\bu)$. 
In this case, $\overline{\mathcal C}({\bf u})$ is not interesting and is said to be trivial. 
\item The extended code $\overline{\C}(\bu)$ depends on $\bu$ but does not require a selected generator matrix of $\C$. 
\item $\overline{\C}(\bu_1)$ and $\overline{\C}(\bu_2)$ may have different minimum distances or different weight distributions. 
\end{itemize} 
The classical extended codes $\overline{\C}(-\bone)$ of a number of types of linear codes were studied in the literature. Recently,  nonstandardly extended codes $\overline{\C}(\bu)$ of linear codes were investigated in \cite{SDC}. To distinguish the extended codes $\overline{\C}(\bu)$ of $\C$ from those extended codes 
$\widetilde{\C}(G, \bg)$, we called these  $\overline{\C}(\bu)$ the \emph{second kind of extended codes} 
of $\C$.

For a given $[n, k,d]$ linear code $\C$ over $\gf(q)$, the two kinds of extended codes $\overline{\C}(\bu)$ 
and $\widetilde{\C}(G, \bg)$ are different but related in the following senses. 
We have the following remarks. 
\begin{itemize}
\item For each $\overline{\C}(\bu)$, there is a vector $\bg$ such that $\overline{\C}(\bu)=\widetilde{\C}(G, \bg)$ 
for each generator matrix $G$. 
\item For a given  $\widetilde{\C}(G, \bg)$, where $G$ is a $k \times n$ generator matrix of $\C$, there are $q^{n-k}$ 
vectors $\bu$ in $\gf(q)^n$ such that $\overline{\C}(\bu)=\widetilde{\C}(G, \bg)$, as $\rank(G)=\rank(G|\bg)=k \leq n$. These $\bu$ are the solutions to the system of equations $G\bu^T=\bg$.  
\end{itemize}
Hence, both extending techniques yield the same set of $[n+1, k]$ linear codes when they are applied to an 
$[n, k]$ linear code $\C$.   

The first kind of extended codes of only a few families of MDS and other types of codes were studied in the literature. 
Relatively speaking, the second kind of extended codes of more families of linear codes were investigated in the 
literature (see, e.g., \cite{SD, SDC}). By definition, it looks harder to determine the minimum distance of an extended 
code $\widetilde{\C}(G, \bg)$ than the minimum distance of an extended code $\overline{\C}(\bu)$ in general. 
  
This paper is motivated by the following open problem:
  \begin{problem} \label{problem-3} {\rm
Is there a vector $\bu\in \gf(q)^n$ such that the extended code $\overline{\C}(\bu)$ remains  MDS for a given MDS code $\C$ of length $n$ over $\gf(q)$? What are such vectors $\bu\in \gf(q)^n$ if they exist? 
}
 \end{problem}
   
In this paper, we will investigate the second kind of extended codes $\overline{\C}(\bu)$ of MDS codes $\C$ regarding Open Problem \ref{problem-3}. The rest of the paper is arranged as follows. In Section \ref{sec-prel}, we will recall some required  concepts and results on the extended codes and (extended) GRS codes over finite fields. In Section \ref{sec-mainres}, we will establish a relation between the extended codes $\overline{\C}(\bu)$ of MDS codes and deep holes of the dual codes $\C^\perp$, which is the main result of this paper. As applications of this main result, we will investigate  the second kind of extended codes of the GRS codes and extended GRS codes and also give partial answers to the open problem above for these special families of MDS codes in Sections \ref{sec-special1} and \ref{sec-special2}, respectively. 
As another application of the main result, we will determine the covering radii of several families of MDS codes in Section \ref{sec-special3}. Finally, we will conclude this paper and make some concluding remarks  in Section \ref{sec-final}.

\section{Preliminaries}\label{sec-prel} 

In this section, we will recall some concepts and known results, which will be needed later. 

\subsection{A basic result of the extended codes $\overline{\C}(\bu)$ of linear codes $\C$}

The dual code of an $[n,k,d]$ linear code $\mathcal C$ over $\gf(q)$ is defined by
$$\mathcal C^{\bot} = \{ (x_1, \ldots, x_{n})={\bf x}\in \gf(q)^{n} \mid \langle {\bf x},{\bf y}\rangle=\sum_{i=1}^{n}x_iy_i=0~  \forall~  {\bf y}= (y_1, \ldots,y_{n})\in \mathcal C\}.$$

Recall the definition of the extended code $\overline{\C}(u)$ in \eqref{eq1}.  It is easy to check that the following lemma holds.

\begin{lemma}\label{lem1}{\rm  \cite{SDC} ~Let $\mathcal C$ be an $[n,k,d]$ linear code over $\gf(q)$ and ${\bf u}\in \gf(q)^n$. If $\mathcal C $ has generator matrix $G$ and parity check matrix $H$, then the generator and parity check matrices for the extended code $\overline{\mathcal C}({\bf u})$ in \eqref{eq1} are $(G~ G{\bf u}^T)$ and \begin{equation*}\left(\begin{array}{cccc} H & {\bf 0}^T\\ {\bf u} & -1
\end{array}\right), \end{equation*}
where ${\bf u}^T$ denotes the transpose of ${\bf u}$.
}
\end{lemma}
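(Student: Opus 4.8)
The plan is to verify the two claimed matrices directly against the definition \eqref{eq1}, handling the generator matrix and the parity-check matrix in turn; the whole argument is linear-algebraic bookkeeping, with the only subtle point being a rank count.

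First I would establish the generator matrix. Write $\bg_1, \ldots, \bg_k$ for the rows of $G$, which form a basis of $\C$. By \eqref{eq1} each $\bg_i$ lifts to the codeword $(\bg_i, \langle \bu, \bg_i\rangle) = (\bg_i, \bg_i \bu^T)$ of $\overline{\C}(\bu)$. Since the map $\bc \mapsto (\bc, \bc\bu^T)$ is $\gf(q)$-linear and injective (its first $n$ coordinates already recover $\bc$), it carries the basis $\{\bg_i\}$ of $\C$ to a basis of $\overline{\C}(\bu)$. Stacking these lifted rows yields exactly the matrix $(G~G\bu^T)$, whose $i$-th row is $(\bg_i, \bg_i\bu^T)$; hence it is a generator matrix of the $k$-dimensional code $\overline{\C}(\bu)$.

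Next I would treat the claimed parity-check matrix $H' = \left(\begin{smallmatrix} H & \bzero^T \\ \bu & -1\end{smallmatrix}\right)$ in two steps. For the containment step, take any $\bc \in \C$ together with its lift $\bx = (\bc, c_{n+1})$, where $c_{n+1} = \bu\bc^T$, and compute $H'\bx^T$ blockwise: the top block gives $H\bc^T = \bzero$ because $H$ is a parity-check matrix of $\C$, while the bottom row gives $\bu\bc^T - c_{n+1} = 0$ by the defining relation in \eqref{eq1}. Thus every codeword of $\overline{\C}(\bu)$ lies in the right null space of $H'$.

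The remaining step, and the only place demanding care, is the rank count that upgrades this containment to an equality. I would note that $H$ has full row rank $n-k$ and that the last column of $H'$ forces its bottom row to be independent of the top block: any combination of the top rows vanishes in coordinate $n+1$, whereas a nonzero multiple of the bottom row does not. Hence $\rank(H') = n-k+1$, so its right null space has dimension $(n+1)-(n-k+1) = k = \dim \overline{\C}(\bu)$; combined with the containment above this forces $\overline{\C}(\bu) = \{\bx : H'\bx^T = \bzero\}$, confirming that $H'$ is a parity-check matrix. I expect no genuine obstacle here: the subtlety is merely to verify full row rank rather than to stop at the annihilation identity $(G~G\bu^T)(H')^T = \bzero$, which alone would leave open the possibility that $H'$ is rank-deficient.
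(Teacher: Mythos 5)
Your proposal is correct and complete: the lifting map $\bc \mapsto (\bc, \bc\bu^T)$ argument for the generator matrix, the blockwise annihilation for containment, and the rank count $\rank(H') = n-k+1$ (forced by the $-1$ in the last column) together constitute exactly the standard verification. The paper itself gives no proof of this lemma --- it states that it ``is easy to check'' and cites the reference \cite{SDC} --- so there is nothing to contrast with; your write-up, in particular the observation that the annihilation identity alone would not suffice without the full-row-rank check, is the argument the paper leaves implicit.
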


\subsection {The GRS codes and extended GRS codes}

In this section, we recall the GRS codes and the extended GRS codes as well as their basic properties.  

\begin{definition}\label{def1} {\rm
For a positive integer $k$, let $$ \gf(q)[x]_k=\{f(x)\in \gf(q)[x]|\mbox{ deg}(f(x))\le k-1\},$$
which is a $ \gf(q)$-linear space of dimension $k$. Take ${\bf a}=(\alpha_1,\alpha_2,\ldots, \alpha_n)\in \gf(q)^n$, where  $\alpha_1,\ldots, \alpha_n$ are pairwise distinct,   ${\bf v}=(v_1,v_2,\ldots,v_n)\in ( \gf(q)^*)^n$, 
and   $k\le n \le q$.  The {\it generalized Reed-Solomon code}
(GRS code for short) is defined by 
$$\mathcal{C}_k({\bf a}, {\bf v})=\{(v_1f(\alpha_1), v_2f(\alpha_2),\ldots, v_nf(\alpha_n))|
\ f(x)\in  \gf(q)[x]_k\}.$$}
\end{definition}

It is well known that 
$\mathcal{C}_k({\bf a}, {\bf v})$  is an $[n, k, n-k+1]$  MDS code over $ \gf(q)$. When $v_i=1$ for all $1\le i\le n,$ the code $\mathcal{C}_k({\bf a}, {\bf 1})$ is called a \emph{Reed-Solomon code}.
It is easily seen that   $\mathcal{C}_k({\bf a}, {\bf v})$ has  generator matrix 
\begin{equation}\label{eq2}
G_k=\left(\begin{array}{cccc} v_1\alpha_1^0 &v_2\alpha_2^0 &\ldots  &v_n\alpha^0_{n}\\ v_1\alpha_1^1 &v_2\alpha_2^1 &\ldots & v_n\alpha^1_{n}\\ \vdots &\vdots &\ddots&\vdots \\ v_1\alpha_1^{k-1} &v_2\alpha_2^{k-1} &\ldots & v_n\alpha^{k-1}_{n}\end{array}\right) \end{equation}   and parity check matrix  \begin{equation}\label{eq3}
H_{n-k}=\left(\begin{array}{cccc} w_1\alpha_1^0 &w_2\alpha_2^0 &\ldots  &w_n\alpha^0_{n}\\ w_1\alpha_1^1 &w_2\alpha_2^1 &\ldots & w_n\alpha^1_{n}\\ \vdots &\vdots&\ddots &\vdots \\ w_1\alpha_1^{n-k-1} &w_2\alpha_2^{n-k-1} &\ldots & w_n\alpha^{n-k-1}_{n}\end{array}\right),\end{equation}
where these $w_i$ are given in the next lemma.  

\begin{lemma} \label{lem2} {\rm  \cite{HP03} $\mathcal{C}_k^\bot({\bf a}, {\bf v})=\mathcal{C}_{n-k}({\bf a}, {\bf w})$, where  ${\bf w}=(w_1,w_2,\ldots, w_n)$ and for $i=1,2,\ldots,n$, \begin{equation}\label{eq4}
 {w_i} = \frac{1}{v_i\prod_{1\leq j\leq n,j\neq i}(\alpha_i-\alpha_j)}.\end{equation}

}
\end{lemma}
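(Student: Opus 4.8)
The plan is to prove the equality by combining a dimension count with a single containment, where the containment is verified by a direct inner-product computation that reduces to a classical polynomial identity. Since $\mathcal{C}_k({\bf a}, {\bf v})$ has dimension $k$, its dual $\mathcal{C}_k^\bot({\bf a}, {\bf v})$ has dimension $n-k$; by Definition \ref{def1} the code $\mathcal{C}_{n-k}({\bf a}, {\bf w})$ also has dimension $n-k$ (the evaluation points $\alpha_1,\ldots,\alpha_n$ are distinct and $n-k\le n\le q$, and all $w_i$ are nonzero because the $v_i$ are nonzero and the $\alpha_i$ are distinct). Consequently it suffices to establish the inclusion $\mathcal{C}_{n-k}({\bf a}, {\bf w})\subseteq \mathcal{C}_k^\bot({\bf a}, {\bf v})$, and equality then follows automatically from equal dimensions.

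To prove the inclusion, I would take an arbitrary codeword $(v_1f(\alpha_1),\ldots,v_nf(\alpha_n))$ of $\mathcal{C}_k({\bf a}, {\bf v})$ with $\deg f\le k-1$ and an arbitrary codeword $(w_1g(\alpha_1),\ldots,w_ng(\alpha_n))$ of $\mathcal{C}_{n-k}({\bf a}, {\bf w})$ with $\deg g\le n-k-1$, and compute their inner product. Writing $h=fg$, which has degree at most $(k-1)+(n-k-1)=n-2$, and using that by \eqref{eq4} the product simplifies to $v_iw_i=\dfrac{1}{\prod_{1\le j\le n,\,j\neq i}(\alpha_i-\alpha_j)}$, the inner product becomes
\begin{equation*}
\sum_{i=1}^n v_i w_i\, f(\alpha_i)g(\alpha_i)=\sum_{i=1}^n \frac{h(\alpha_i)}{\prod_{1\le j\le n,\,j\neq i}(\alpha_i-\alpha_j)}.
\end{equation*}
Thus everything reduces to showing that this sum vanishes whenever $\deg h\le n-2$.

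The crux is the following classical Lagrange-interpolation identity: for distinct $\alpha_1,\ldots,\alpha_n$ and any polynomial $h$ with $\deg h\le n-1$, one has $h(x)=\sum_{i=1}^n h(\alpha_i)\prod_{j\neq i}\frac{x-\alpha_j}{\alpha_i-\alpha_j}$, since both sides agree at the $n$ points $\alpha_i$ and have degree at most $n-1$. Comparing the coefficients of $x^{n-1}$ on both sides shows that the leading coefficient of $h$ equals $\sum_{i=1}^n \frac{h(\alpha_i)}{\prod_{j\neq i}(\alpha_i-\alpha_j)}$. When $\deg h\le n-2$ the coefficient of $x^{n-1}$ is zero, so the displayed sum is zero, which is exactly what is needed; this completes the inclusion and hence the equality. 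The only delicate point to get right is the degree bookkeeping, namely verifying that $\deg(fg)\le n-2$ falls strictly below the threshold $n-1$ at which the interpolation sum would instead recover a nonzero leading coefficient; this is the step I expect to need the most care, though it is ultimately routine. As a remark, the same identity applied to the monomials $h(x)=x^{s+t}$ with $0\le s\le k-1$ and $0\le t\le n-k-1$ shows directly that the matrix product of $G_k$ in \eqref{eq2} and the transpose of $H_{n-k}$ in \eqref{eq3} is the zero matrix, giving an equivalent matrix-level verification.
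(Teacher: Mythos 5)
Your proof is correct, and there is nothing to compare it against within the paper: Lemma \ref{lem2} is stated there without proof, quoted directly from \cite{HP03}. Your argument --- the dimension count $\dim \mathcal{C}_k^\bot({\bf a},{\bf v}) = n-k = \dim \mathcal{C}_{n-k}({\bf a},{\bf w})$ (with the check that each $w_i\neq 0$), reduction to the single inclusion $\mathcal{C}_{n-k}({\bf a},{\bf w})\subseteq \mathcal{C}_k^\bot({\bf a},{\bf v})$, and the vanishing of $\sum_{i=1}^n h(\alpha_i)\big/\prod_{j\neq i}(\alpha_i-\alpha_j)$ for $\deg h\le n-2$ via comparison of the $x^{n-1}$ coefficient in the Lagrange interpolation formula --- is precisely the standard textbook proof of this fact, with the degree bookkeeping $\deg(fg)\le (k-1)+(n-k-1)=n-2$ handled correctly.
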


Lemma \ref{lem2} tells us that the dual  code of  $\mathcal{C}_k^\bot({\bf a}, {\bf v})$ is also a GRS code with parameters $[n, n-k,k+1]$.

\begin{definition}\label{def2} {\rm
By including the point at infinity $\infty=(0, 0, \ldots, 0, 1) \in \gf(q)^k$, the {\it extended GRS code} (EGRS code for short) is defined as follows:
$$\mathcal{C}_k({\bf a}, {\bf v},\infty)=\{(v_1f(\alpha_1), v_2f(\alpha_2),\ldots, v_{n}f(\alpha_{n}),f_{k-1})|
\mbox{ for all }f(x)\in \gf(q)[x]_k\},$$ where $f_{k-1}$ is the coefficient of $x^{k-1}$ in $f(x)$.
}
\end{definition}

It is well known that $\mathcal{C}_k({\bf a}, {\bf v},\infty)$ is an $[n+1, k, n-k+2]$  MDS code over $\gf(q)$. Furthermore, 
$\mathcal{C}_k({\bf a}, {\bf v},\infty)$ has generator matrix $G_{k,\infty}$ and parity check matrix $H_{n+1-k, \infty}$, where 
 \begin{equation}\label{eq5}
G_{k,\infty}:=(G_k| \infty^T)=\left(\begin{array}{ccccc} v_1\alpha_1^0 &v_2\alpha_2^0 &\ldots  &v_{n}\alpha_n^0&0\\ v_1\alpha_1^1 &v_2\alpha_2^1 &\ldots & v_{n}\alpha^1_{n}&0\\ \vdots &\vdots &\ddots&\vdots& \vdots\\ v_1\alpha_1^{k-2} &v_2\alpha_2^{k-2}&\ldots & v_{n}\alpha_{n}^{k-2}&0\\v_1\alpha_1^{k-1} &v_2\alpha_2^{k-1}&\ldots & v_{n}\alpha_{n}^{k-1}&1\end{array}\right) \end{equation}
and  \begin{equation}\label{eq6}
H_{n+1-k, \infty}:=\left(\begin{array}{ccccc} w_1\alpha_1^0 &w_2\alpha_2^0 &\ldots  &w_n\alpha^0_{n}&0\\ w_1\alpha_1^1 &w_2\alpha_2^1 &\ldots & w_n\alpha^1_{n}&0\\ \vdots &\vdots &\ddots&\vdots&\vdots \\ w_1\alpha_1^{n-k-1} &w_2\alpha_2^{n-k-1} &\ldots & w_n\alpha^{n-k-1}_{n}&0\\ w_1\alpha_1^{n-k} &w_2\alpha_2^{n-k} &\ldots & w_n\alpha^{n-k}_{n}&-1\end{array}\right),\end{equation} 
where $G_k$ was defined in \eqref{eq2}. 
Using our earlier notation, we have 
$$
\mathcal{C}_k({\bf a}, {\bf v},\infty)=\widetilde{ \mathcal{C}_k({\bf a}, {\bf v})} (G_k, \infty^T). 
$$
Hence, $\mathcal{C}_k({\bf a}, {\bf v},\infty)$ is a first kind of extended code of $\mathcal{C}_k({\bf a}, {\bf v})$. 
The following result is well known. 

\begin{lemma}\label{lem3}{\rm  \cite[Lemma 2.1]{SYY}
   $\mathcal{C}_k^{\bot}({\bf a}, {\bf v},\infty)=\mathcal{C}_{n+1-k}({\bf a}, {\bf w},\infty)$, where ${\bf w}=(w_1,w_2,\ldots, w_n)$ and ${w_i} $ was given in \eqref{eq4} for $i=1,2,\ldots,n$.
}
\end{lemma}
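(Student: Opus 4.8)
The plan is to prove the identity $\mathcal{C}_k^{\bot}({\bf a},{\bf v},\infty)=\mathcal{C}_{n+1-k}({\bf a},{\bf w},\infty)$ by a dimension count combined with a single orthogonality computation. Both codes have length $n+1$: by Definition \ref{def2} the code $\mathcal{C}_k({\bf a},{\bf v},\infty)$ has dimension $k$, while $\mathcal{C}_{n+1-k}({\bf a},{\bf w},\infty)$, being the EGRS code built from the polynomial space $\gf(q)[x]_{n+1-k}$, has dimension $n+1-k$. Since $k+(n+1-k)=n+1$ equals the length, it suffices to show that every codeword of $\mathcal{C}_{n+1-k}({\bf a},{\bf w},\infty)$ is orthogonal to every codeword of $\mathcal{C}_k({\bf a},{\bf v},\infty)$; the equality then follows because the two codes are mutually orthogonal subspaces whose dimensions add up to $n+1$. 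Concretely, I would verify the matrix identity $G_{k,\infty}\,H_{n+1-k,\infty}^{T}=\mathbf{0}$ using the explicit matrices in \eqref{eq5} and \eqref{eq6}, whose rows generate the two codes in question.

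Next I would reduce the orthogonality to a classical polynomial identity. Writing $L(x)=\prod_{j=1}^{n}(x-\alpha_j)$, formula \eqref{eq4} gives $v_iw_i=1/L'(\alpha_i)$, where $L'(\alpha_i)=\prod_{j\neq i}(\alpha_i-\alpha_j)$. For a codeword $(v_1f(\alpha_1),\ldots,v_nf(\alpha_n),f_{k-1})$ with $\deg f\le k-1$ and a codeword with last coordinate read off $\mathcal{C}_{n+1-k}({\bf a},{\bf w},\infty)$ via a $g$ with $\deg g\le n-k$, the inner product reduces to
\begin{equation*}
\sum_{i=1}^{n}\frac{(fg)(\alpha_i)}{L'(\alpha_i)}\;-\;f_{k-1}\,g_{n-k},
\end{equation*}
where the second term comes from multiplying the last columns of \eqref{eq5} and \eqref{eq6} and is exactly where the sign $-1$ in the infinity coordinate of \eqref{eq6} enters. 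At the level of matrix entries, the $(r,s)$ entry of $G_{k,\infty}H_{n+1-k,\infty}^{T}$ (for $0\le r\le k-1$, $0\le s\le n-k$) is $\sum_{i=1}^{n}\alpha_i^{\,r+s}/L'(\alpha_i)-\delta_{r,k-1}\delta_{s,n-k}$.

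The key step is the Lagrange/partial-fraction identity: for any polynomial $h$ with $\deg h\le n-1$ one has $\sum_{i=1}^{n}h(\alpha_i)/L'(\alpha_i)=[x^{n-1}]\,h(x)$, the coefficient of $x^{n-1}$ in $h$. This follows from Lagrange interpolation $h(x)=\sum_{i=1}^{n}h(\alpha_i)\,L(x)/\big((x-\alpha_i)L'(\alpha_i)\big)$ by comparing coefficients of $x^{n-1}$, using that each $L(x)/(x-\alpha_i)$ is monic of degree $n-1$. Applying this with $h=fg$, whose degree is at most $(k-1)+(n-k)=n-1$ and whose $x^{n-1}$-coefficient is $f_{k-1}g_{n-k}$, gives $\sum_{i=1}^{n}(fg)(\alpha_i)/L'(\alpha_i)=f_{k-1}g_{n-k}$, so the inner product vanishes. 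In matrix form this reads $\sum_{i}\alpha_i^{\,r+s}/L'(\alpha_i)=\delta_{r+s,\,n-1}$, and since $r+s=n-1$ forces $r=k-1$ and $s=n-k$ on the given ranges, every entry of $G_{k,\infty}H_{n+1-k,\infty}^{T}$ is zero. This is precisely the ``extended'' analogue of Lemma \ref{lem2}, to which the computation specializes on the first $n$ coordinates.

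I expect the main point to watch to be the bookkeeping of the infinity coordinate rather than the identity itself: the bare sum $\sum_{i}(fg)(\alpha_i)/L'(\alpha_i)=f_{k-1}g_{n-k}$ produces a positive term, and orthogonality holds exactly because the parity-check matrix \eqref{eq6} contributes the compensating $-f_{k-1}g_{n-k}$. The delicate step is thus to match the sign convention carried by the infinity symbol between \eqref{eq6} and Definition \ref{def2} so that the row space of $H_{n+1-k,\infty}$ is identified with $\mathcal{C}_{n+1-k}({\bf a},{\bf w},\infty)$; once that convention is fixed, the degree-counting argument and the identity for $\sum_{i}h(\alpha_i)/L'(\alpha_i)$ are the only nontrivial ingredients and the proof concludes by the dimension count.
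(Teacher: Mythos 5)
The paper offers no proof of Lemma \ref{lem3} at all (it is quoted as well known from \cite{SYY}), so there is nothing to compare line by line; your strategy (mutual orthogonality plus a dimension count, with the partial-fraction identity $\sum_{i=1}^{n}h(\alpha_i)/L'(\alpha_i)=[x^{n-1}]h(x)$ for $\deg h\le n-1$ doing the real work) is the natural and standard one. Your computation of the entries of $G_{k,\infty}H_{n+1-k,\infty}^{T}$ is correct: the $(r,s)$ entry is $\delta_{r+s,n-1}-\delta_{r,k-1}\delta_{s,n-k}=0$, and together with the dimension count this proves that $\mathcal{C}_k^{\bot}({\bf a},{\bf v},\infty)$ equals the row space of the matrix $H_{n+1-k,\infty}$ in \eqref{eq6}.

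The genuine gap is the step you explicitly defer, namely ``matching the sign convention'' so that this row space is identified with $\mathcal{C}_{n+1-k}({\bf a},{\bf w},\infty)$. That identification is not bookkeeping: it fails whenever $q$ is odd. The row space of $H_{n+1-k,\infty}$ is $\{(w_1g(\alpha_1),\ldots,w_ng(\alpha_n),\,-g_{n-k}):\deg g\le n-k\}$, while Definition \ref{def2} applied to ${\bf w}$ gives $\{(w_1g(\alpha_1),\ldots,w_ng(\alpha_n),\,+g_{n-k})\}$. If a vector lies in both sets, the two interpolating polynomials agree at the $n$ points $\alpha_i$ and have degree at most $n-k<n$, hence coincide, forcing $g_{n-k}=-g_{n-k}$; so the two codes are equal only in characteristic $2$. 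Equivalently, if you run your inner-product computation honestly on two Definition-\ref{def2} codewords, the infinity coordinates contribute $+f_{k-1}g_{n-k}$, not $-f_{k-1}g_{n-k}$, and you get $\sum_{i}(fg)(\alpha_i)/L'(\alpha_i)+f_{k-1}g_{n-k}=2f_{k-1}g_{n-k}$, which is nonzero for $f=x^{k-1}$, $g=x^{n-k}$ when $q$ is odd. What your argument actually establishes is $\mathcal{C}_k^{\bot}({\bf a},{\bf v},\infty)=\mathcal{C}_{n+1-k}({\bf a},-{\bf w},\infty)$ (substitute $g\mapsto -g$ in the row-space description). So a complete write-up cannot end where yours does: you must either restrict to even $q$, or prove this corrected statement and note that, with ${\bf w}$ as defined in \eqref{eq4} and the EGRS convention of Definition \ref{def2}, the lemma as printed carries a sign discrepancy (presumably inherited from translating the conventions of \cite{SYY}).
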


By Lemma \ref{lem3}, $\mathcal{C}_k^{\bot}({\bf a}, {\bf v},\infty)$ is also an EGRS code with parameters $[n+1, n+1-k,k+1]$.


\section{The extended codes and deep holes of MDS codes}\label{sec-mainres}

In this section, we will establish the relationship between extended codes and deep holes of MDS codes. 
Before doing this, we need to recall the definition of deep holes of a linear code. 
 
The \emph{covering radius}  of a code $\mathcal C$, denoted by $\rho(\mathcal C)$, is the maximum distance from any vector in $\gf(q)^n$ to the nearest codeword in $\C$. A \emph{deep hole} is a vector achieving the covering radius. Before presenting our main result, we need to recall the following well-known characterisation  of covering radius of a linear code in terms of a generator matrix. 

\begin{lemma}{\rm \label{lem4} \cite[Lemma II.7]{ZWK} Let $G$  denote a generator  matrix of an $[n,k]$ MDS code $\mathcal C$. Then the covering radius  $\rho(\mathcal C)=n-k$ if and only if there exists a vector ${\bf x}\in \gf(q)^n$ such that the $(k+1)\times n$ matrix $\bigg(\frac{G}{{\bf x}}\bigg)$ generates an $[n,k+1]$ MDS code.



}
\end{lemma}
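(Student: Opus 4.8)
The plan is to reduce the statement to two facts: a direct formula for the minimum distance of the augmented code, and the standard redundancy bound $\rho(\mathcal{C}) \le n-k$ valid for every $[n,k]$ code.

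First I would set $\mathcal{C}'$ to be the $[n,k+1]$ code generated by $\left(\frac{G}{{\bf x}}\right)$, noting that having dimension $k+1$ already forces ${\bf x} \notin \mathcal{C}$. Every codeword of $\mathcal{C}'$ has the form ${\bf c} + \lambda {\bf x}$ with ${\bf c} \in \mathcal{C}$ and $\lambda \in \gf(q)$, so I would split the nonzero codewords according to whether $\lambda = 0$. The codewords with $\lambda = 0$ are exactly the nonzero codewords of the MDS code $\mathcal{C}$, of weight at least $n-k+1$. For $\lambda \neq 0$, the factorisation ${\bf c} + \lambda {\bf x} = \lambda(\lambda^{-1}{\bf c} + {\bf x})$ gives $\wt({\bf c} + \lambda {\bf x}) = \wt(\lambda^{-1}{\bf c} + {\bf x})$, and since $\lambda^{-1}{\bf c}$ runs over all of $\mathcal{C}$ as ${\bf c}$ does, the minimum weight over this family is exactly the coset weight $d({\bf x}, \mathcal{C})$. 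This yields the key identity
\begin{equation*}
d(\mathcal{C}') = \min\{\, n-k+1,\ d({\bf x}, \mathcal{C}) \,\},
\end{equation*}
so that $\mathcal{C}'$ is MDS (i.e.\ has $d(\mathcal{C}') = n-k$) if and only if $d({\bf x}, \mathcal{C}) = n-k$.

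Second, I would invoke the redundancy bound $\rho(\mathcal{C}) \le n-k$: since the parity-check matrix $H$ has rank $n-k$, every syndrome is realised by an error vector supported on some $n-k$ coordinates, so each coset has a leader of weight at most $n-k$. The two implications then follow at once. If $\left(\frac{G}{{\bf x}}\right)$ is MDS for some ${\bf x}$, the identity gives $d({\bf x}, \mathcal{C}) = n-k$, hence $\rho(\mathcal{C}) \ge n-k$, and the redundancy bound forces $\rho(\mathcal{C}) = n-k$. Conversely, if $\rho(\mathcal{C}) = n-k$, choosing a deep hole ${\bf x}$ with $d({\bf x}, \mathcal{C}) = n-k \ge 1$ ensures ${\bf x} \notin \mathcal{C}$, and the identity gives $d(\mathcal{C}') = \min\{n-k+1, n-k\} = n-k$, so $\mathcal{C}'$ is MDS.

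The computations are routine; the point requiring the most care is the coset-weight step, where I must confirm that the contributions of the cosets $\lambda {\bf x} + \mathcal{C}$ over all $\lambda \neq 0$ collapse to the single distance $d({\bf x}, \mathcal{C})$, together with the chain $d({\bf x}, \mathcal{C}) \le \rho(\mathcal{C}) \le n-k$, so that the minimum in the identity is genuinely governed by $d({\bf x}, \mathcal{C})$ and the MDS threshold $n-k$ is met exactly rather than undershot. This bookkeeping, rather than any deep idea, is the main obstacle.
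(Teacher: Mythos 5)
Your proof is correct, and it takes a genuinely different route from the paper's treatment. For the record, the paper gives no proof of this lemma at all: it is imported verbatim from \cite[Lemma II.7]{ZWK}. The closest in-paper argument is the proof of Lemma \ref{lem5}(2), which handles the companion statement (assuming $\rho(\C)=n-k$, the deep holes of $\C$ are exactly the vectors ${\bf x}$ making $\left(\frac{G}{{\bf x}}\right)$ generate an MDS code) via the column criterion: MDS-ness is read as ``every $k+1$ columns are linearly independent,'' and a dependence among $k+1$ columns is converted into a codeword agreeing with ${\bf x}$ in $k+1$ coordinates. Your route instead rests on the coset-weight identity $d(\C')=\min\{\,n-k+1,\ d({\bf x},\C)\,\}$, obtained by splitting the codewords ${\bf c}+\lambda{\bf x}$ according to whether $\lambda=0$ and using scalar invariance of weight to collapse all cosets $\lambda{\bf x}+\C$, $\lambda\neq 0$, to the single quantity $d({\bf x},\C)$, combined with the redundancy bound $\rho(\C)\le n-k$. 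This buys you two things the paper's presentation does not make explicit: (i) the full biconditional of Lemma \ref{lem4}, including the covering-radius assertion itself, which Lemma \ref{lem5}(2) only takes as a hypothesis; and (ii) Lemma \ref{lem5}(2) as an immediate corollary of the same identity, since under $\rho(\C)=n-k$ a vector ${\bf x}$ is a deep hole exactly when $d({\bf x},\C)=n-k$. What the column-independence formulation buys instead is the determinant-friendly form the paper actually needs later (e.g.\ in the proof of Theorem \ref{thm14}), where MDS-ness is checked column set by column set. One caveat concerning the statement rather than your argument: $k<n$ is implicitly assumed (for $k=n$ one has $\rho(\C)=0=n-k$ yet no $[n,n+1]$ code exists), and your converse direction uses precisely this when deducing ${\bf x}\notin\C$ from $d({\bf x},\C)=n-k\ge 1$.
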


The determination of deep holes for a given linear code is a hard problem in general.  Below we provide a general characterization of deep holes of MDS codes with parity check matrices. The following lemma is a generalisation of \cite[Proposition 2]{ZCL}, which was stated only for the Reed-Solomon codes.  

\begin{lemma}{\rm \label{lem5}    

 (1)  \cite[Lemma II. 2]{ZWK} Let $\mathcal C$ be an  $[ n, k ]$ linear code with parity check matrix $H$. The vector ${\bf u}$ is a deep hole of $\mathcal C$ if and only if the vector $H{\bf u}^T$ can not be written as a linear combination of any $(\rho(\mathcal C) -1)$ columns of $H$.
 
 (2)   Suppose $G$ is a generator matrix of an $[n,k]$ MDS code $\mathcal C $ over $\gf(q)$  with covering radius $\rho = n - k$. Then $\bu \in \gf(q)^n$  is a deep hole of $\mathcal C$ if and only if
$$G'=\bigg( \frac{G}{\bu}\bigg)$$ generates an MDS code.
}
\end{lemma}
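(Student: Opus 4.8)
The plan is to bypass any column-rank bookkeeping and instead compute the minimum distance of the augmented code directly, reading the equivalence off from it. Write $\mathcal D$ for the code generated by $G'=\left(\frac{G}{\bu}\right)$, so that $\mathcal D=\mathcal C+\gf(q)\bu$, and set $d(\bu,\mathcal C)=\min_{\bc\in\mathcal C}\wt(\bu-\bc)$. Since $\rho(\mathcal C)=n-k$ by hypothesis, $\bu$ is a deep hole precisely when $d(\bu,\mathcal C)=n-k$. First I would dispose of the degenerate case: because $d(\bu,\mathcal C)\le \rho(\mathcal C)=n-k$ for every $\bu$, a deep hole satisfies $d(\bu,\mathcal C)=n-k\ge 1$ (assuming $k<n$), hence $\bu\notin\mathcal C$ and $\mathcal D$ genuinely has dimension $k+1$. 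Conversely, the phrase ``$G'$ generates an MDS code'' is to be read as ``$G'$ generates an $[n,k+1]$ MDS code'', which already forces $\rank(G')=k+1$ and thus $\bu\notin\mathcal C$; so in both directions I may assume $\bu\notin\mathcal C$ and that $\mathcal D$ is an $[n,k+1]$ code.

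The heart of the argument is a single weight computation. Every codeword of $\mathcal D$ lies either in $\mathcal C$ or in a coset $\lambda\bu+\mathcal C$ with $\lambda\in\gf(q)^{*}$. The smallest nonzero weight coming from $\mathcal C$ is $d(\mathcal C)=n-k+1$, while for each $\lambda\ne 0$ one has $\min_{\bc\in\mathcal C}\wt(\lambda\bu+\bc)=d(\lambda\bu,\mathcal C)=d(\bu,\mathcal C)$, the last equality holding because $\wt(\lambda\bv)=\wt(\bv)$ for $\lambda\ne 0$, so scaling a coset representative does not change its distance to the linear code $\mathcal C$. Therefore the minimum distance of $\mathcal D$ is $d(\mathcal D)=\min\{\,n-k+1,\ d(\bu,\mathcal C)\,\}$, and since $d(\bu,\mathcal C)\le n-k<n-k+1$ this collapses to $d(\mathcal D)=d(\bu,\mathcal C)$.

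The conclusion is then immediate from the Singleton bound. As $\mathcal D$ is an $[n,k+1]$ code, it is MDS if and only if $d(\mathcal D)=n-(k+1)+1=n-k$, that is, if and only if $d(\bu,\mathcal C)=n-k$, that is, if and only if $\bu$ is a deep hole of $\mathcal C$. Both implications of the stated equivalence fall out at once, so no separate forward and backward arguments are needed.

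I expect the only genuine subtleties to be the role of the hypothesis $\rho(\mathcal C)=n-k$ and the interpretation of ``MDS code''. The hypothesis is exactly what makes ``$d(\bu,\mathcal C)=n-k$'' synonymous with ``deep hole''; were $\rho(\mathcal C)>n-k$, a vector with $d(\bu,\mathcal C)=n-k$ would still give an MDS $\mathcal D$ without being a deep hole, so the equivalence would genuinely fail, and this is also why the condition dovetails with Lemma~\ref{lem4}. I would also flag that ``$G'$ generates an MDS code'' must carry the intended dimension $k+1$, since $\bu\in\mathcal C$ would make $G'$ generate the MDS code $\mathcal C$ itself while $\bu$ is not a deep hole. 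An alternative derivation from part~(1) is possible, by rephrasing ``$H\bu^{T}$ is not a linear combination of any $n-k-1$ columns of $H$'' as the statement that every $n-k$ columns of $(H\mid H\bu^{T})$ are independent and then dualising; but the direct coset computation above is shorter and entirely self-contained, relying only on the Singleton bound and the definition of covering radius.
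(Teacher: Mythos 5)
Your proof is correct, but it takes a genuinely different route from the paper's. The paper argues at the level of matrices, in two separate contradiction arguments: for the forward direction it assumes some $k+1$ columns of $G'$ are dependent, uses the MDS property of $\mathcal C$ to conclude that the restriction of $\bu$ to those $k+1$ coordinates lies in the row space of the corresponding restriction of $G$, and so produces a codeword $\bg\in\mathcal C$ with $d(\bu,\bg)\le n-(k+1)<\rho$; for the backward direction it turns $d(\bu,\mathcal C)<n-k$ into a codeword agreeing with $\bu$ on $k+1$ coordinates, hence $k+1$ dependent columns of $G'$. You instead work with the augmented code $\mathcal D=\mathcal C+\gf(q)\bu$ itself: the coset decomposition and the invariance $\wt(\lambda\bv)=\wt(\bv)$ for $\lambda\ne 0$ give $d(\mathcal D)=\min\{n-k+1,\,d(\bu,\mathcal C)\}=d(\bu,\mathcal C)$ (since $d(\bu,\mathcal C)\le\rho=n-k$), and the Singleton bound for the $[n,k+1]$ code $\mathcal D$ delivers both implications at once. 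Your route buys a single self-contained computation with no coordinate bookkeeping, plus an explicit treatment of the dimension subtlety (that ``generates an MDS code'' must mean an $[n,k+1]$ MDS code, equivalently $\bu\notin\mathcal C$), which the paper passes over silently; the paper's route buys a verification phrased exactly in the column-independence language of Lemma \ref{lem4}, which is the form reused in the proof of Theorem \ref{thm6}. One small quibble with your closing commentary: for any $[n,k]$ code one has $\rho(\mathcal C)\le n-k$, so your hypothetical ``$\rho(\mathcal C)>n-k$'' cannot occur; the relevant alternative for an MDS code is $\rho(\mathcal C)=n-k-1$, in which case no vector attains distance $n-k$ and it is the deep-hole-implies-MDS direction that fails. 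This does not affect the proof itself, which correctly uses only the stated hypothesis $\rho=n-k$.
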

\begin{proof} (2). ($\Longrightarrow$). Suppose that $\bu$ is a deep hole of $\mathcal C$, we need to show that $G'$ is a generator matrix for another MDS code. Equivalently, we need to show that any $k+1$ columns of $G'$ are linearly independent.

Assume there exist $k+1$ columns of $G'$ which are linearly dependent. Without loss of generality, assume that the first $k+1$ columns of $G'$ are linearly dependent. Consider the submatrix consisting of the intersection of the first $k+1$ rows and $k+1$ columns of $G'$. Hence there exist $a_1,\ldots, a_k\in \gf(q)$, not all zero, such that $$( u_1,\ldots, { u}_{k+1})=a_1{\bf r}_{1,k+1}+\cdots+a_k{\bf r}_{k,k+1},$$where $\bu=(u_1, \ldots, u_n)$ and ${\bf r}_{i,k+1} $ is the vector consisting of the first $k+1$ elements of $i$-th row of $G$ for $1\le i\le k$. Let $\bg=a_1{\bf r}_1+\cdots+a_k{\bf r}_k\in \mathcal C$, where ${\bf r}_i$ is the $i$-th row of $G$ for $1\le i\le k $. We have $$d(\bu,\bg)\le n-(k+1)<\rho,$$ which is a contradiction with the assumption that $\bu$ is a deep hole of $\mathcal C$.

($\Longleftarrow$). Now suppose $G'$ is a generator matrix for an MDS code, i.e., any $k+1$ columns of $G'$ are linearly independent. We need to show that $d(\bu, \mathcal C)=n-k$. 

Assume that $d(\bu, \mathcal C)<n-k$. Equivalently, there exist $a_1, \ldots, a_k\in \gf(q)$ such that $\bu $ and $\bg=a_1{\bf r}_1+\cdots+a_k{\bf r}_k\in \mathcal C$ have more than  $k$ common coordinates, where ${\bf r}_i$ is the $i$-th row of $G$ for $1\le i\le k$. Without loss of generality, assume that the first $k+1$ coordinates of $\bu$ and $\bg$ are the same. Consider the submatrix consisting of the first $k+1$ columns. Since the rank of the matrix is less than $k+1$, thus the first $k+1$ columns of $G'$ are linearly dependent, which contradicts the assumption. 
This completes the proof.
\end{proof}

We are now ready to present our main result, which gives an answer to Open Problem \ref{problem-3} regarding MDS codes.

\begin{theorem} \label {thm6}{\rm  Let $\mathcal C$ be an $[n,k]$ MDS code over $\gf(q)$.  Then for any  ${\bf u}\in \gf(q)^n$, the extended code  $\overline{\mathcal C}({\bf u})$ in \eqref{eq1} is MDS   if and only if $\rho(\mathcal C^{\bot})=k$ and  ${\bf u}$ is a deep hole of the dual code $\mathcal C^{\bot}$.

}
\end{theorem}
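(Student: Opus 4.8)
The plan is to pass to the dual code and exploit the standard fact that a linear code is MDS if and only if its dual is MDS. Since $\overline{\mathcal C}(\bu)$ has parameters $[n+1,k]$, it is MDS precisely when its dual $\overline{\mathcal C}(\bu)^\bot$, an $[n+1,n-k+1]$ code, is MDS. By Lemma \ref{lem1}, a parity-check matrix of $\overline{\mathcal C}(\bu)$ — equivalently a generator matrix of $\overline{\mathcal C}(\bu)^\bot$ — is
$$M=\left(\begin{array}{cc} H & \bzero^T\\ \bu & -1\end{array}\right),$$
where $H$ is a parity-check matrix of $\mathcal C$, that is, a generator matrix of the $[n,n-k]$ MDS code $\mathcal C^\bot$. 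First I would note that $M$ has full row rank $n-k+1$ (its last row carries the only nonzero entry in the final coordinate), so $M$ genuinely generates $\overline{\mathcal C}(\bu)^\bot$.

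The heart of the argument is to reduce the MDS-ness of $M$ to a statement about the augmented matrix $\bigg(\frac{H}{\bu}\bigg)$, which is exactly the object governed by Lemmas \ref{lem4} and \ref{lem5}. The code $\overline{\mathcal C}(\bu)^\bot$ is MDS if and only if every $n-k+1$ columns of $M$ are linearly independent, and I would split the choice of columns into two cases. If the chosen columns avoid the last column of $M$, they form $n-k+1$ columns of $\bigg(\frac{H}{\bu}\bigg)$, so independence here is exactly the condition that $\bigg(\frac{H}{\bu}\bigg)$ generates an $[n,n-k+1]$ MDS code. If the chosen columns include the last column $(\bzero^T,-1)^T$, then expanding the determinant of the corresponding $(n-k+1)\times(n-k+1)$ submatrix along that column reduces it, up to sign, to the determinant of $n-k$ columns of $H$; since $\mathcal C^\bot$ is MDS, any $n-k$ columns of $H$ are independent, so this case holds automatically for every $\bu$. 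Hence $\overline{\mathcal C}(\bu)^\bot$ is MDS if and only if $\bigg(\frac{H}{\bu}\bigg)$ generates an $[n,n-k+1]$ MDS code.

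It then remains to translate this into the covering-radius and deep-hole conditions, applying the earlier lemmas to $\mathcal C^\bot$, whose generator matrix is $H$ and whose redundancy is $k$. For ($\Longleftarrow$), if $\rho(\mathcal C^\bot)=k$ and $\bu$ is a deep hole of $\mathcal C^\bot$, then Lemma \ref{lem5}(2) gives that $\bigg(\frac{H}{\bu}\bigg)$ generates an MDS code, whence $\overline{\mathcal C}(\bu)$ is MDS. For ($\Longrightarrow$), if $\bigg(\frac{H}{\bu}\bigg)$ generates an $[n,n-k+1]$ MDS code, then Lemma \ref{lem4} (applied to $\mathcal C^\bot$, with $\bu$ serving as the witness vector ${\bf x}$) forces $\rho(\mathcal C^\bot)=k$; Lemma \ref{lem5}(2) now applies and shows $\bu$ is a deep hole of $\mathcal C^\bot$. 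This also explains why both conditions must appear together: Lemma \ref{lem5}(2) is conditional on the covering radius already equalling $k$, and Lemma \ref{lem4} supplies exactly that hypothesis from the existence of a single good augmentation.

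The step I expect to require the most care is the column case-analysis for $M$, and in particular the observation that the last column contributes ``for free'': the fact that adjoining $(\bzero^T,-1)^T$ can never destroy linear independence, thanks to $\mathcal C^\bot$ being MDS, is what cleanly decouples the problem from the extra coordinate and lets the augmented matrix $\bigg(\frac{H}{\bu}\bigg)$ carry the entire content of the statement. Everything else is a direct invocation of the stated lemmas, so beyond this reduction the proof should be routine.
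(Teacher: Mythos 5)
Your proof is correct and follows essentially the same route as the paper's: both reduce the MDS-ness of $\overline{\C}(\bu)$, via Lemma \ref{lem1} and a case analysis on whether the chosen columns of the bordered parity-check matrix include the last column $(\bzero^T,-1)^T$, to the condition that $\bigg(\frac{H}{\bu}\bigg)$ generates an $[n,n+1-k]$ MDS code, and then invoke Lemma \ref{lem4} for $\rho(\C^\perp)=k$ and Lemma \ref{lem5} for the deep-hole claim. The only cosmetic difference is that you phrase the reduction through the dual code and use Lemma \ref{lem5}(2) in both directions, whereas the paper establishes the forward direction's deep-hole claim via Lemma \ref{lem5}(1) applied to the matrix $(G~G\bu^T)$.
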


\begin{proof} Suppose that the code $\mathcal C$ has  generator  and parity check matrices $G$ and   $H$, respectively. 
By Lemma \ref{lem1},   the generator and parity check matrices for the code $\overline{\mathcal C}({\bf u})$ are $(G~ G{\bf u}^T)$ and \begin{equation*}\left(\begin{array}{cccc} H & {\bf 0}^T\\ {\bf u} & -1
\end{array}\right). \end{equation*} 
 Then the code $\overline{\mathcal C}({\bf u})$ is MDS if and only if the following hold:

\begin{enumerate}
\item  The submatrix  consisting of the vector $(0,\ldots,0,-1)^T$ and any $n-k$ columns from $\bigg( \frac{H}{\bf u}\bigg)$ is nonsingular.  

\item The submatrix consisting of any $n+1-k$ columns from $\bigg( \frac{H}{\bf u}\bigg)$ is nonsingular.

\end{enumerate}

($\Longrightarrow$). Assume that $\overline{\mathcal C}({\bf u})$ is MDS with parameters $[n+1, k,n+2-k]$.
Since $\mathcal C$ is an $[n,k, n-k+1]$ MDS code, the condition 1) holds.
Let $\mathcal D$ be the code generated by $\bigg( \frac{H}{\bf u}\bigg)$.  From 2),  the dual code of $\mathcal D$ has parameters $[n, k-1, n-k+2]$ and hence $\mathcal D$ is also MDS with parameters $[n,n+1-k,k]$.  
From Lemma \ref{lem4}, $\rho(\mathcal C^{\bot})=n-(n-k)=k$. From Lemma \ref{lem5}, ${\bf u}$ is a deep hole of $\mathcal C^{\bot}$ if and only if the vector $G{\bf u}^T$ cannot be written as a linear combination of any $(\rho(\mathcal C^{\bot}) -1)=(k-1)$ columns of $G$.
Note that $(G~ G{\bf u}^T)$ is a parity check matrix of $(\overline{\mathcal C}({\bf u}))^{\bot}$, which is an $[n+1, n+1-k,k+1]$ MDS code. Hence the submatrix consisting of any $k$ columns from $(G~G{\bf u}^T)$ is nonsingular. Hence ${\bf u}$ is a deep hole of $\mathcal C^{\bot}$.

($\Longleftarrow$).  Let us assume that $\rho(\mathcal{C}^\bot) = k$, and ${\bf u}$ represents a deep hole of the dual code $\mathcal{C}^\bot$. According to Lemma \ref{lem5}(2), it follows that $\left(\frac{H}{{\bf u}}\right)$ generates an MDS code with parameters $[n, n+1-k, k]$. Furthermore, the dual code of this generated code has parameters $[n, k-1, n+2-k]$. Therefore, the condition 2) mentioned above is satisfied.

Moreover, the condition 1) is also fulfilled because $\mathcal{C}$ is an $[n, k, n-k+1]$ MDS code, and $H$ serves as a parity-check matrix for $\mathcal{C}$. Consequently, the code $\overline{\mathcal{C}}({\bf u})$ is proven to be MDS. This completes the proof. 
\end{proof}

As applications of Theorem \ref{thm6}, we will deal with the extended codes of the GRS codes and EGRS codes in 
Sections \ref{sec-special1} and \ref{sec-special2}, and will determine the covering radii of some MDS codes in Section
\ref{sec-special3}. 




\section{Applications of Theorem \ref{thm6} to the generalized Reed-Solomon 
codes}\label{sec-special1}

In this section, we consider two applications of Theorem \ref{thm6} to the generalized Reed-Solomon 
codes $\mathcal{C}_k({\bf a}, {\bf v})$ and their related codes.


\subsection{The covering radius of $\mathcal{C}_k({\bf a}, {\bf v})^\perp$}

As an application of Theorem \ref{thm6}, we derive the covering radius of $\mathcal{C}_k({\bf a}, {\bf v})^\perp$ in this subsection. 
It was shown earlier that $\mathcal{C}_k({\bf a}, {\bf v},\infty)=\widetilde{\mathcal{C}_k({\bf a}, {\bf v})}(G_k, \infty^T)$ with $\infty^T=(0,
\ldots, 0,1)^T$ in Equation \eqref{eqn-tildeGv}. We would first find a vector $\bu \in \gf(q)^n$ such that 
\begin{equation}\label{eqn-ding1}
\mathcal{C}_k({\bf a}, {\bf v},\infty)=\overline{\mathcal{C}_k({\bf a}, {\bf v})}({\bf u}). 
\end{equation} 
The next theorem documents such a vector $\bu$.

\begin{theorem} \label{thm7}{\rm Take ${\bf a}=(\alpha_1,\alpha_2,\ldots, \alpha_n)\in \gf(q)^n$, where  $\alpha_1,\alpha_2,\ldots, \alpha_n$ are pairwise distinct,   ${\bf v}=(v_1,v_2,\ldots,v_n)\in ( \gf(q)^*)^n$, and   $k\le n \le q$. Then
$\mathcal{C}_k({\bf a}, {\bf v},\infty)=\overline{\mathcal{C}_k({\bf a}, {\bf v})}({\bf u})$, where ${\bf u}=(u_1, u_2, \ldots, u_n)\in \gf(q)^n$ and  $$u_i=   \frac{\alpha_i^{n-k}}{v_i\prod_{1\leq j\leq n,j\neq i}(\alpha_i-\alpha_j)}.$$

}
\end{theorem}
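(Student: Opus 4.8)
The plan is to reduce the claimed equality of codes to a single matrix–vector identity and then verify that identity by Lagrange interpolation. By Lemma~\ref{lem1}, the extended code $\overline{\mathcal{C}_k({\bf a},{\bf v})}({\bf u})$ has generator matrix $(G_k \mid G_k{\bf u}^T)$, while by \eqref{eq5} the EGRS code $\mathcal{C}_k({\bf a},{\bf v},\infty)$ has generator matrix $G_{k,\infty}=(G_k \mid \infty^T)$ with $\infty^T=(0,\ldots,0,1)^T$. Both generator matrices agree in their first $n$ columns, namely the block $G_k$, which has full row rank $k$. I would first observe that whenever two matrices $(G_k \mid {\bf g}_1)$ and $(G_k \mid {\bf g}_2)$ share the full-rank block $G_k$, their row spaces coincide if and only if ${\bf g}_1={\bf g}_2$: a codeword has the form ${\bf m}(G_k \mid {\bf g})=({\bf m}G_k \mid {\bf m}{\bf g}^T)$, and since $G_k$ has trivial left kernel the message ${\bf m}$ is recovered from the first $n$ coordinates, so the last coordinates agree for every ${\bf m}$ precisely when ${\bf g}_1={\bf g}_2$. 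Hence the whole theorem reduces to verifying the single identity $G_k{\bf u}^T=\infty^T$.

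The next step is to compute $G_k{\bf u}^T$ entrywise. Reading off row $\ell+1$ of $G_k$ from \eqref{eq2}, the $(\ell+1)$-th coordinate of $G_k{\bf u}^T$ equals $\sum_{i=1}^n v_i\alpha_i^{\ell}u_i$ for $0\le \ell\le k-1$. Substituting the stated value of $u_i$, the factors $v_i$ cancel and this becomes
$$\sum_{i=1}^n \frac{\alpha_i^{\,\ell+n-k}}{\prod_{1\le j\le n,\,j\ne i}(\alpha_i-\alpha_j)}.$$
Writing $m=\ell+n-k$, the exponent runs over $n-k\le m\le n-1$ as $\ell$ ranges from $0$ to $k-1$, and in particular $0\le m\le n-1$ throughout (using $k\le n$). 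Thus I need to show that this sum equals $1$ when $m=n-1$ (that is, $\ell=k-1$) and $0$ otherwise, which is exactly the claim that $G_k{\bf u}^T=(0,\ldots,0,1)^T=\infty^T$.

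I would establish this by Lagrange interpolation. For any polynomial $f$ of degree at most $n-1$, interpolation at the pairwise distinct nodes $\alpha_1,\ldots,\alpha_n$ gives
$$f(x)=\sum_{i=1}^n f(\alpha_i)\prod_{1\le j\le n,\,j\ne i}\frac{x-\alpha_j}{\alpha_i-\alpha_j},$$
and comparing the coefficient of $x^{n-1}$ on both sides shows that $\sum_{i=1}^n \frac{f(\alpha_i)}{\prod_{j\ne i}(\alpha_i-\alpha_j)}$ equals the coefficient of $x^{n-1}$ in $f$. Taking $f(x)=x^m$ with $0\le m\le n-1$, this coefficient is $1$ exactly when $m=n-1$ and $0$ otherwise, which is the identity needed in the second paragraph. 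Combining this with the reduction of the first paragraph finishes the argument. Neither step is a genuine obstacle: the one conceptual point is recognising that equality of the two codes collapses to $G_k{\bf u}^T=\infty^T$, and the only computational ingredient is the standard interpolation identity; the place demanding care is the bookkeeping of the exponent shift $m=\ell+n-k$ to confirm that $m$ stays in the range $0\le m\le n-1$ so that the identity applies.
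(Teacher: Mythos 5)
Your proof is correct, and its skeleton matches the paper's: both arguments use Lemma \ref{lem1} to reduce the claim to the single identity $G_k{\bf u}^T=(0,\ldots,0,1)^T=\infty^T$, after which the generator matrices $(G_k \mid G_k{\bf u}^T)$ and $G_{k,\infty}$ coincide and the codes are equal. Where you differ is in how that identity is verified. The paper observes that the $n\times n$ matrix $G_n$ is invertible, invokes Cramer's rule on the system $G_nX=(0,\ldots,0,1)^T$, and identifies its solution with the dual-GRS weight vector ${\bf w}$ of Lemma \ref{lem2}; since $u_i=\alpha_i^{n-k}w_i$, the entries of $G_k{\bf u}^T$ are then entries $n-k+1,\ldots,n$ of $G_n{\bf w}^T$, which gives the result. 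You instead verify each entry directly: after the $v_i$ cancel, the $(\ell+1)$-th entry is $\sum_{i}\alpha_i^{\ell+n-k}/\prod_{j\ne i}(\alpha_i-\alpha_j)$, and comparing coefficients of $x^{n-1}$ in the Lagrange interpolation of $x^{\ell+n-k}$ shows this equals $\delta_{\ell,k-1}$. The two verifications rest on the same underlying interpolation identity, but yours is self-contained and is actually tighter at one point: the paper asserts that the solution of $G_nX=(0,\ldots,0,1)^T$ is ${\bf w}$ itself, whereas Lemma \ref{lem2} only yields $G_{n-1}{\bf w}^T={\bf 0}$, i.e.\ the solution up to a scalar; the missing normalization $\sum_i v_i\alpha_i^{n-1}w_i=1$ is exactly the $m=n-1$ case of your interpolation identity, so your route closes that small gap. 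Your full-row-rank argument also upgrades the reduction step to an equivalence, which is more than is needed (one direction suffices, since equal generator matrices trivially give equal codes) but is correct and harmless.
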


\begin{proof}

It is clear that $$G_{n}=\left(\begin{array}{cccc} v_1\alpha_1^0 &v_2\alpha_2^0 &\ldots  &v_n\alpha^0_{n}\\ v_1\alpha_1^1 &v_2\alpha_2^1 &\ldots & v_n\alpha^1_{n}\\ \vdots &\vdots &\ddots&\vdots \\ v_1\alpha_1^{n-1} &v_2\alpha_2^{n-1} &\ldots & v_n\alpha^{n-1}_{n}\end{array}\right)$$ is invertible.
Then according to the Cramer Rule, the system of the equations
\begin{equation}\label{eq7}
\left(\begin{array}{cccc} v_1\alpha_1^0 &v_2\alpha_2^0 &\ldots  &v_n\alpha^0_{n}\\ v_1\alpha_1^1 &v_2\alpha_2^1 &\ldots & v_n\alpha^1_{n}\\ \vdots &\vdots &\ddots&\vdots \\ v_1\alpha_1^{n-1} &v_2\alpha_2^{n-1} &\ldots & v_n\alpha^{n-1}_{n}\end{array}\right)\left(\begin{array}{c}x_1\\x_2\\ \vdots
\\ x_n\end{array}\right)=\left(\begin{array}{c}0\\ \vdots
\\ 0\\1 \end{array}\right)
\end{equation}
has a nonzero solution. Hence the system $G_{n-1}X=0$ has a nonzero solution $(w_1,\ldots, w_{n})^T$ where ${w_i} = \frac{1}{v_i\prod_{1\leq j\leq n,j\neq i}(\alpha_i-\alpha_j)}$ was given in \eqref{eq4} for $i=1,2,\ldots,n$. 
Therefore, we have 
$$G_k{\bf u}^T=\left(\begin{array}{cccc} v_1\alpha_1^0 &v_2\alpha_2^0 &\ldots  &v_n\alpha^0_{n}\\ v_1\alpha_1^1 &v_2\alpha_2^1 &\ldots & v_n\alpha^1_{n}\\ \vdots &\vdots &\ddots&\vdots \\ v_1\alpha_1^{k-1} &v_2\alpha_2^{k-1} &\ldots & v_n\alpha^{k-1}_{n}\end{array}\right)\left(\begin{array}{c}u_1\\u_2\\ \vdots
\\ u_n\end{array}\right)=\left(\begin{array}{c}0\\ \vdots
\\ 0\\1\end{array}\right).$$

Notice that the code $\mathcal{C}_k({\bf a}, {\bf v})$ has generator matrix $G_k$ in \eqref{eq2}. By Lemma \ref{lem1} and the system of equations above,  $\overline{\mathcal{C}_k({\bf a}, {\bf v})}({\bf u})$ has generator matrix $(G_k ~ G_k{\bf u}^T)=G_{k,\infty}$, which is exactly  a generator matrix of $\mathcal{C}_k({\bf a}, {\bf v},\infty)$ in \eqref{eq5}. 
This completes the proof.
\end{proof}

Recall that  $\mathcal{C}_k({\bf a}, {\bf v},\infty)$ is MDS. Combining Theorems \ref{thm6} and \ref{thm7} then gives the following result. 

\begin{theorem} \label{thm67}{\rm Let ${\bf a}=(\alpha_1,\alpha_2,\ldots, \alpha_n)\in \gf(q)^n$, where  $\alpha_1,\alpha_2,\ldots, \alpha_n$ are pairwise distinct,   ${\bf v}=(v_1,v_2,\ldots,v_n)\in ( \gf(q)^*)^n$, and   $k\le n \le q$. Then
$$
\rho( \mathcal{C}_k({\bf a}, {\bf v})^\perp)=k 
$$ 
and the $\bu$ given in Theorem \ref{thm7} is a deep hole of $ \mathcal{C}_k({\bf a}, {\bf v})^\perp$. 
}
\end{theorem}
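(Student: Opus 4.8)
The plan is to combine the two immediately preceding theorems and to exploit the fact that the extended GRS code is already known to be MDS; no new computation is really needed. First I would set $\mathcal{C}=\mathcal{C}_k({\bf a},{\bf v})$, which by the remark following Definition \ref{def1} is an $[n,k,n-k+1]$ MDS code over $\gf(q)$. Thus $\mathcal{C}$ is exactly the kind of code to which Theorem \ref{thm6} applies, with $\mathcal{C}^{\bot}=\mathcal{C}_k({\bf a},{\bf v})^{\bot}$ playing the role of the dual code appearing in that theorem.

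Next I would invoke Theorem \ref{thm7} to identify the relevant extended code. For the specific vector ${\bf u}=(u_1,\ldots,u_n)$ with $u_i=\alpha_i^{n-k}\big/\big(v_i\prod_{1\le j\le n,\,j\ne i}(\alpha_i-\alpha_j)\big)$ given in the statement, Theorem \ref{thm7} tells us that $\overline{\mathcal{C}_k({\bf a},{\bf v})}({\bf u})=\mathcal{C}_k({\bf a},{\bf v},\infty)$. Since it was recorded after Definition \ref{def2} that $\mathcal{C}_k({\bf a},{\bf v},\infty)$ is an $[n+1,k,n-k+2]$ MDS code, it follows at once that the extended code $\overline{\mathcal{C}}({\bf u})$ is MDS for precisely this ${\bf u}$.

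Finally I would apply the forward ($\Longrightarrow$) direction of Theorem \ref{thm6} to the MDS code $\overline{\mathcal{C}}({\bf u})$. That direction forces two conclusions simultaneously: that $\rho(\mathcal{C}^{\bot})=k$, and that ${\bf u}$ is a deep hole of $\mathcal{C}^{\bot}$. Translating back through $\mathcal{C}^{\bot}=\mathcal{C}_k({\bf a},{\bf v})^{\bot}$, these are exactly the two assertions $\rho(\mathcal{C}_k({\bf a},{\bf v})^{\bot})=k$ and ``${\bf u}$ is a deep hole of $\mathcal{C}_k({\bf a},{\bf v})^{\bot}$'' that the theorem claims, so the proof is complete.

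I do not expect any genuine obstacle in the combination itself, since the substantive work lives in Theorems \ref{thm6} and \ref{thm7}. The only point deserving care is bookkeeping about which code carries the covering-radius and deep-hole conclusions: Theorem \ref{thm6} is phrased in terms of the dual $\mathcal{C}^{\bot}$ of the code being \emph{extended}, and here the code being extended is $\mathcal{C}_k({\bf a},{\bf v})$ itself, so the statements land correctly on $\mathcal{C}_k({\bf a},{\bf v})^{\bot}$ rather than on $\mathcal{C}_k({\bf a},{\bf v})$. One should also note in passing that the hypothesis $n\le q$ guarantees the existence of the $n$ pairwise distinct evaluation points, so that all the cited GRS facts are available.
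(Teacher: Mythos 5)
Your proposal is correct and follows exactly the paper's own route: the paper proves Theorem \ref{thm67} precisely by noting that $\mathcal{C}_k({\bf a},{\bf v},\infty)$ is MDS and combining Theorem \ref{thm7} (which identifies it as $\overline{\mathcal{C}_k({\bf a},{\bf v})}({\bf u})$ for the stated $\bu$) with the forward direction of Theorem \ref{thm6}. Your added care about which code carries the covering-radius and deep-hole conclusions is sound bookkeeping but introduces nothing different from the paper's argument.
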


\subsection{Finding all $u$'s  such that $\overline{\mathcal{C}_k({\bf a}, {\bf v})}({\bf u})$ is MDS in some cases}

In this subsection, we will find all vectors $u$  such that $\overline{\mathcal{C}_k({\bf a}, {\bf v})}({\bf u})$ is MDS in some cases. By Theorem \ref{thm6}, we have the following theorem immediately:

\begin{theorem}\label{thm8}{\rm Take ${\bf a}=(\alpha_1,\alpha_2,\ldots, \alpha_n)\in \gf(q)^n$, where  $\alpha_1,\alpha_2,\ldots, \alpha_n$ are pairwise distinct,   ${\bf v}=(v_1,v_2,\ldots,v_n)\in ( \gf(q)^*)^n$, and   $k\le n \le q$.  For a vector ${\bf u}=(u_1, u_2, \ldots, u_n)\in \gf(q)^n$, the extended code $\overline{\mathcal{C}_k({\bf a}, {\bf v})}({\bf u})$ in \eqref{eq1} is MDS with parameters  $[n+1, k, n-k+2]$ if and only if $\rho(\mathcal{C}_{n-k}({\bf a}, {\bf w}))=k$ and
${\bf u}$ is a deep hole of the code $\mathcal{C}_{n-k}({\bf a}, {\bf w})$, where  ${\bf w}=(w_1,w_2,\ldots, w_n)$ and for each $1\le i\le n$, $w_i$ was given in \eqref{eq4}.
}
\end{theorem}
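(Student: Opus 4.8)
The plan is to obtain Theorem~\ref{thm8} as a direct specialization of the general Theorem~\ref{thm6}, the only extra ingredient being the explicit GRS description of the dual code supplied by Lemma~\ref{lem2}. First I would record that $\mathcal{C}_k({\bf a}, {\bf v})$ is an $[n,k,n-k+1]$ MDS code, so that the hypotheses of Theorem~\ref{thm6} are met with $\mathcal{C}=\mathcal{C}_k({\bf a}, {\bf v})$. Theorem~\ref{thm6} then asserts that for any ${\bf u}\in\gf(q)^n$ the extended code $\overline{\mathcal{C}_k({\bf a}, {\bf v})}({\bf u})$ in \eqref{eq1} is MDS if and only if $\rho\big(\mathcal{C}_k({\bf a}, {\bf v})^\bot\big)=k$ and ${\bf u}$ is a deep hole of the dual $\mathcal{C}_k({\bf a}, {\bf v})^\bot$.

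Next I would rewrite both conditions in concrete form. By Lemma~\ref{lem2} the dual code is itself a GRS code, namely $\mathcal{C}_k({\bf a}, {\bf v})^\bot=\mathcal{C}_{n-k}({\bf a}, {\bf w})$ with ${\bf w}=(w_1,\ldots,w_n)$ and $w_i$ as in \eqref{eq4}. Substituting this identification turns $\rho\big(\mathcal{C}_k({\bf a}, {\bf v})^\bot\big)=k$ into $\rho\big(\mathcal{C}_{n-k}({\bf a}, {\bf w})\big)=k$, and turns ``${\bf u}$ is a deep hole of $\mathcal{C}_k({\bf a}, {\bf v})^\bot$'' into ``${\bf u}$ is a deep hole of $\mathcal{C}_{n-k}({\bf a}, {\bf w})$'', which is exactly the claimed equivalence.

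Finally I would justify the stated parameters $[n+1,k,n-k+2]$: by the construction in \eqref{eq1} the extended code always has length $n+1$ and dimension $k$, so whenever it is MDS its minimum distance must equal $(n+1)-k+1=n-k+2$ by the Singleton bound. I expect no genuine obstacle here, since the statement is a verbatim application of Theorem~\ref{thm6}; the only point requiring care is to invoke Lemma~\ref{lem2} correctly, so that the abstract dual $\mathcal{C}^\bot$ appearing in Theorem~\ref{thm6} is translated into its explicit GRS form $\mathcal{C}_{n-k}({\bf a}, {\bf w})$ with the weights $w_i$ of \eqref{eq4}.
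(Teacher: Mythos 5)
Your proposal is correct and matches the paper's own treatment: the paper derives Theorem~\ref{thm8} immediately from Theorem~\ref{thm6} by identifying the dual code as $\mathcal{C}_{n-k}({\bf a}, {\bf w})$ via Lemma~\ref{lem2}, exactly as you do. Your additional remark that the parameters $[n+1,k,n-k+2]$ follow from the Singleton bound once the code is MDS is a harmless (and correct) bit of extra care.
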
 

To find all ${\bf u}$ such that $\overline{\mathcal{C}_k({\bf a}, {\bf v})}({\bf u})$ is MDS in some cases,  
we need to recall some results on deep holes of the GRS codes. 

Note that  any GRS code is of the form $\C'= \{ cM : c \in \C \}$ where $\C$ is an $[ n, k ]$ Reed-Solomon code over 
$\gf(q)$ and $M$ is an invertible $n \times n$ diagonal matrix over $\gf(q) $. 
Clearly, the set of deep holes of $\C'$ is $\{ {\bf x}M: {\bf x} \mbox{ is a deep hole of } \C\}$. 
Therefore, for the problem of determining the deep holes of the GRS codes, 
it surfaces to treat only Reed-Solomon codes.

Given a vector ${\bf f} = (f_1, f_2, \ldots , f_n) \in \gf(q)^n$, we consider the following Lagrange interpolating polynomial
\begin{equation*}f(x)=\sum_{i=1}^nf_i\prod_{j=1, j\neq i}^n \frac{x-\alpha_j}{\alpha_i-\alpha_j}\in \gf(q)[x],
\end{equation*}
where ${\bf a}=(\alpha_1,\alpha_2,\ldots, \alpha_n)\in \gf({q})^n$. 
The Lagrange interpolating polynomial is the unique polynomial in $\gf(q)[x]$ of degree less than $n$ that satisfies 
$f(\alpha_i) = f_i $ with $ 1\le i\le n$. We say that a function $f(x)$ generates a vector ${\bf f} \in \gf(q)^n$  if ${\bf f} = (f(\alpha_1), f(\alpha_2), \ldots, f(\alpha_n))$.

We say that  two functions $f(x)$ and $g(x)$ are equivalent over $\gf(q)$ if and only if there exist $a\in \gf(q)^*$ and $h(x)$ with degree less than $\mbox{deg}(f(x))$ such that $g(x)=af(x)+h(x)$.

\begin{lemma}{\rm  \cite[Theorem 1]{K}\label{lem9} Let $q>2$ be a power of a prime, $k\ge (q-1)/2, $ and $k<n\le q.$ Take  ${\bf a}=(\alpha_1,\alpha_2,\ldots, \alpha_n)\in \gf(q)^n$, where  $\alpha_1,\alpha_2,\ldots, \alpha_n$ are pairwise distinct. The only deep holes of $\mathcal{C}_k({\bf a}, {\bf 1})$ are generated by functions which are equivalent to the following:
$$f(x)=x^k, f_{\pi}=\frac{1}{x-\pi},$$
where $\pi\in \gf(q)\backslash \{\alpha_1, \ldots, \alpha_n\}$.


}

\end{lemma}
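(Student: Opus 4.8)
The plan is to work directly with the Reed--Solomon code $\mathcal{C}_k({\bf a},{\bf 1})$ and translate the deep-hole property into the non-vanishing of divided differences. First I would record that $\rho(\mathcal{C}_k({\bf a},{\bf 1}))=n-k$: the redundancy bound gives $\rho\le n-k$, and the word generated by $x^k$ attains distance $n-k$ (since $x^k-g$ has degree exactly $k$ for any $g$ with $\deg g<k$, it has at most $k$ roots among the $\alpha_i$, and one can force exactly $k$ by taking $x^k-g=\prod_{i\in S}(x-\alpha_i)$ for a $k$-subset $S$). Represent an arbitrary ${\bf f}\in\gf(q)^n$ by its Lagrange interpolant $f(x)$ of degree $<n$; then $d({\bf f},\mathcal{C}_k({\bf a},{\bf 1}))=n-\max_{\deg g<k}\#\{i:f(\alpha_i)=g(\alpha_i)\}$, and a codeword agrees with ${\bf f}$ in more than $k$ places exactly when some $(k+1)$ evaluation points are interpolated by a polynomial of degree $<k$, i.e. when the associated $(k+1)$-point divided difference vanishes. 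Hence the first key step is the equivalence: ${\bf f}$ is a deep hole iff
$$
D_T(f):=\sum_{i\in T}\frac{f(\alpha_i)}{\prod_{j\in T,\,j\ne i}(\alpha_i-\alpha_j)}\ne 0\quad\text{for every }(k+1)\text{-subset }T\subseteq\{1,\dots,n\},
$$
which by Lemma \ref{lem5}(2) is the same as saying that $\bigg(\frac{G_k}{{\bf f}}\bigg)$ generates an $[n,k+1]$ MDS code.

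For the sufficiency direction I would verify the two families through a short residue computation. Writing $L(x)=\prod_{i=1}^n(x-\alpha_i)$ and $F_i=f(\alpha_i)/L'(\alpha_i)$, one rewrites $D_T(f)=\sum_{i\notin S}F_i\prod_{j\in S}(\alpha_i-\alpha_j)$ with $S=\{1,\dots,n\}\setminus T$ of size $n-k-1$. For $f=x^k$ this equals the leading coefficient $1$, hence is never zero. For $f_\pi$ generating $(1/(\alpha_i-\pi))_i$, summing the residues of $\prod_{j\in S}(x-\alpha_j)\big/\big((x-\pi)L(x)\big)$, whose residue at infinity vanishes because the function decays like $x^{-(k+2)}$, yields $D_T(f_\pi)=-\prod_{j\in S}(\pi-\alpha_j)/L(\pi)$, which is nonzero precisely because $\pi\notin\{\alpha_1,\dots,\alpha_n\}$. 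Thus both families and everything equivalent to them are deep holes, and this direction needs no appeal to the hypothesis $k\ge(q-1)/2$.

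The hard part is the converse: showing that for $k\ge(q-1)/2$ there are no further deep holes. Here I would pass to the geometric reformulation from the first step: ${\bf f}$ is a deep hole iff the $n$ columns $(1,\alpha_i,\dots,\alpha_i^{k-1},f_i)^{\mathsf T}$ form an $n$-arc of $\PG(k,q)$ projecting onto the $n$ points $(1,\alpha_i,\dots,\alpha_i^{k-1})$ of a rational normal curve in $\PG(k-1,q)$. Dualizing, this is equivalent to a certain $[n,n-k-1]$ code being MDS, i.e. an $n$-arc in $\PG(n-k-2,q)$; the hypothesis forces $n-k-1\le(q-1)/2$, so the dual arc lives in a projective space of small dimension and large relative size. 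In that regime the normal-rational-curve classification of arcs (the Segre--Thas--Hirschfeld--Korchm\'aros results, i.e. the range of the MDS conjecture that is established here) applies and shows the arc is contained in a normal rational curve; unwinding which rational normal curves pass through the fixed projected points then pins $f_i$ down, up to fractional-linear reparametrization, to exactly $\alpha_i^k$ and $1/(\alpha_i-\pi)$ and their equivalents. I expect this classification to be the main obstacle, since it is exactly where $k\ge(q-1)/2$ is used. As an alternative to citing arc theory, one can attempt a direct argument via the symmetric-function expansion $D_T(f)=\sum_{m\ge n-1}[x^m]\big(f\cdot\prod_{j\in S}(x-\alpha_j)\big)\,h_{m-n+1}(\alpha_1,\dots,\alpha_n)$, where $h_\ell$ is the complete homogeneous symmetric polynomial, and show that for $k<\deg f<n-1$ some $(k+1)$-subset $T$ must satisfy $D_T(f)=0$; carrying this counting through in the range $k\ge(q-1)/2$ is the delicate point.
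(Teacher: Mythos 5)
The paper itself contains no proof of this lemma: it is imported from Kaipa \cite{K}, so there is nothing internal to compare against and your attempt must be judged on its own merits. Your first three steps are correct and well executed: $\rho(\mathcal{C}_k({\bf a},{\bf 1}))=n-k$ via the redundancy bound plus the word generated by $x^k$; the translation of the deep-hole property into nonvanishing of all $(k+1)$-point divided differences, equivalently (given Lemma \ref{lem5}(2)) into the MDS property of the code obtained by appending ${\bf f}$ as a row to $G_k$; and the residue computations giving $D_T(x^k)=1$ and $D_T(f_\pi)=-1\big/\prod_{j\in T}(\pi-\alpha_j)\neq 0$, so that both families (and their equivalence classes) are deep holes. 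You are also right that this direction never uses $k\ge(q-1)/2$.

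The genuine gap is the converse, and the tool you invoke there cannot close it. The ``established range of the MDS conjecture'' bounds the \emph{size} of arcs; what your argument needs is that an arc is \emph{contained in} a rational normal curve, and no such classification theorem exists at the sizes occurring here: your dual arc has only $n$ points in $\PG(n-k-2,q)$, and $n$ may be as small as roughly $q/2$ (e.g.\ $n=k+4$), a regime in which complete arcs lying on no rational normal curve exist already in the plane (there are complete arcs of size about $(q+3)/2$). The structure one must exploit is that $n$ of the $n+1$ points already lie on a rational normal curve, i.e.\ the one-point extension problem for sub-arcs of rational normal curves; the result that settles this in the range forced by $k\ge(q-1)/2$ (dual dimension $n-k\le(q+1)/2$) is the Seroussi--Roth MDS-extension theorem \cite{SR}, and that is exactly the route taken in \cite{K}. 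Worse, the unconditional containment you predict is false in characteristic $2$: if $q$ is even and $n-k=3$, the nucleus $(0:1:0)$ of the conic $\{(1:t:t^2)\}$ lies on no secant, hence always extends the sub-conic arc, and the corresponding vector --- the class of $x^{k+1}-(\sum_i\alpha_i)x^k$ --- is a deep hole belonging to neither family. Concretely, for $q=8$, $n=8$, $k=5$, the word generated by $x^6$ is a deep hole of $\mathcal{C}_5({\bf a},{\bf 1})$ (no six distinct elements of $\gf(8)$ sum to $0$, so no polynomial of degree at most $4$ agrees with $x^6$ in six coordinates), it is not equivalent to $x^5$, and no $\pi$ is available since $n=q$. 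So either the transcription of \cite[Theorem 1]{K} given here silently drops a caveat excluding (or separately listing) this hyperoval case, or the stated range must be tightened; in either event, a correct proof along your lines has to detect and isolate that case, whereas your sketch would ``prove'' containment in a rational normal curve there, which is false. The alternative symmetric-function counting you mention is only a hope, not an argument, and it would run into the same characteristic-$2$ obstruction.
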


By Lemma \ref{lem9}, we have the following corollary, which gives all vectors $\bu$ such that 
$\overline{\mathcal{C}_k({\bf a}, {\bf v})}({\bf u})$ is MDS in the special case that $ (q-1)/2\le n-k$ 
and $k<n\le q$, provided that $\rho(\mathcal{C}_{n-k}({\bf a}, {\bf w}))=k$. 

\begin{corollary}\label{cor10}  {\rm  Let $q$ be a power of a prime, $ (q-1)/2\le n-k$, and $k<n\le q$.
For a vector ${\bf u}=(u_1, u_2, \ldots, u_n)\in \gf(q)^n$, the extended code $\overline{\mathcal{C}_k({\bf a}, {\bf v})}({\bf u})$ in \eqref{eq1} is MDS with parameters  $[n+1, k, n-k+2]$ if and only if  $\rho(\mathcal{C}_{n-k}({\bf a}, {\bf w}))=k$ and ${\bf u}$ is generated by functions which are equivalent to the following:
$$f(x)=x^{n-k}\cdot(\sum_{i=1}^nw_i\prod_{j=1, j\neq i}^n \frac{x-\alpha_j}{\alpha_i-\alpha_j}), f_{\pi}=\frac{1}{x-\pi}\cdot(\sum_{i=1}^nw_i\prod_{j=1, j\neq i}^n \frac{x-\alpha_j}{\alpha_i-\alpha_j})$$
where $w_i$ was given in \eqref{eq4} and $\pi\in \gf(q)\backslash \{\alpha_1, \ldots, \alpha_n\}$. 

}
\end{corollary}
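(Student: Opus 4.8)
The plan is to chain together Theorem~\ref{thm8}, the diagonal-scaling reduction from GRS codes to Reed--Solomon codes recorded just before the corollary, and Lemma~\ref{lem9}. By Theorem~\ref{thm8}, the extended code $\overline{\mathcal{C}_k({\bf a}, {\bf v})}({\bf u})$ is MDS with parameters $[n+1,k,n-k+2]$ if and only if $\rho(\mathcal{C}_{n-k}({\bf a}, {\bf w}))=k$ and ${\bf u}$ is a deep hole of the dual code $\mathcal{C}_{n-k}({\bf a}, {\bf w})$. Thus the entire problem reduces to describing the deep holes of the GRS code $\mathcal{C}_{n-k}({\bf a}, {\bf w})$, which has dimension $n-k$; the point of the hypothesis $(q-1)/2\le n-k$ is precisely to put this dimension into the range where Lemma~\ref{lem9} applies.

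First I would pass from the GRS code to a Reed--Solomon code. Writing $M=\mathrm{diag}(w_1,\ldots,w_n)$, which is invertible since every $w_i\neq 0$, we have $\mathcal{C}_{n-k}({\bf a}, {\bf w})=\{{\bf c}M:{\bf c}\in\mathcal{C}_{n-k}({\bf a}, {\bf 1})\}$, and by the remark preceding the corollary the deep holes of $\mathcal{C}_{n-k}({\bf a}, {\bf w})$ are exactly the vectors ${\bf x}M$ with ${\bf x}$ a deep hole of $\mathcal{C}_{n-k}({\bf a}, {\bf 1})$. Since multiplying each coordinate by a nonzero scalar leaves Hamming weights unchanged, $\rho(\mathcal{C}_{n-k}({\bf a}, {\bf w}))=\rho(\mathcal{C}_{n-k}({\bf a}, {\bf 1}))$, so the condition $\rho=k$ may be read off the Reed--Solomon code. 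I would then invoke Lemma~\ref{lem9} with its dimension parameter taken to be $n-k$: the hypotheses $(q-1)/2\le n-k$ and $n-k<n\le q$ (together with $q>2$) are exactly those required, so the only deep holes ${\bf x}$ of $\mathcal{C}_{n-k}({\bf a}, {\bf 1})$ are generated by functions equivalent to $g(x)=x^{\,n-k}$ or $g_\pi(x)=1/(x-\pi)$ with $\pi\in\gf(q)\setminus\{\alpha_1,\ldots,\alpha_n\}$.

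The remaining step is to push these functions through $M$. Set
\[
W(x)=\sum_{i=1}^n w_i\prod_{j=1,\,j\neq i}^n\frac{x-\alpha_j}{\alpha_i-\alpha_j},
\]
the Lagrange interpolation polynomial of ${\bf w}$, so that $W(\alpha_i)=w_i$ for all $i$. If a deep hole ${\bf x}$ of $\mathcal{C}_{n-k}({\bf a}, {\bf 1})$ is generated by $g$, then ${\bf u}={\bf x}M$ has $u_i=w_i\,g(\alpha_i)=g(\alpha_i)W(\alpha_i)=(gW)(\alpha_i)$, i.e.\ ${\bf u}$ is generated by $g(x)W(x)$. Substituting $g=x^{\,n-k}$ and $g=1/(x-\pi)$ produces precisely the functions $f$ and $f_\pi$ in the statement. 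It then remains to check that the map $g\mapsto gW$ respects equivalence: if $g_2=a\,g_1+h$ with $a\in\gf(q)^*$ and $h$ a codeword polynomial (degree $<n-k$), then $g_2W=a(g_1W)+hW$, where $hW$ is generated by $w_i\,h(\alpha_i)$, a codeword of $\mathcal{C}_{n-k}({\bf a}, {\bf w})$, and $\deg(hW)=\deg h+\deg W<(n-k)+\deg W=\deg(g_1W)$; hence $g_1W$ and $g_2W$ are equivalent. Because $M$ gives a bijection between deep holes that carries equivalence classes to equivalence classes, both implications of the corollary follow at once.

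The step I expect to need the most care is the bookkeeping around the equivalence relation, in particular for the rational generator $g_\pi=1/(x-\pi)$, whose ``degree'' must be interpreted consistently for the inequality $\deg(hW)<\deg(g_\pi W)$ to be meaningful. The clean way to handle this is to read equivalence of generating functions as ``differing by a nonzero scalar multiple plus a codeword'' (a polynomial of degree $<n-k$ for the Reed--Solomon code, or its $M$-image for the GRS code), a property manifestly preserved by multiplication by the fixed nonzero polynomial $W$, with $g_\pi W=W/(x-\pi)$ still defining a valid generating vector since $\pi\notin\{\alpha_1,\ldots,\alpha_n\}$. I would also record the minor points that $q>2$ is needed so that Lemma~\ref{lem9} is available, and that the hypothesis $\rho(\mathcal{C}_{n-k}({\bf a}, {\bf w}))=k$ is inherited verbatim from the Reed--Solomon code through the weight-preserving scaling.
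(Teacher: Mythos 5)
Your proposal is correct and takes essentially the same route the paper intends: Theorem \ref{thm8} reduces the question to deep holes of $\mathcal{C}_{n-k}({\bf a},{\bf w})$, the diagonal-scaling remark preceding the corollary transfers these to the Reed--Solomon code $\mathcal{C}_{n-k}({\bf a},{\bf 1})$, and Lemma \ref{lem9} applied with dimension parameter $n-k$ identifies them, which is precisely the paper's (unwritten) justification ``By Lemma \ref{lem9}.'' Your extra bookkeeping---reading equivalence as ``nonzero scalar multiple plus a codeword-generating polynomial'' so that multiplication by the fixed interpolation polynomial $W(x)$ carries equivalence classes to equivalence classes, and flagging that $q>2$ is needed for Lemma \ref{lem9}---supplies exactly the details the paper's terse statement glosses over, and is needed for the ``if'' direction to be sound.
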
 

\begin{remark}{\rm The determination of all deep-holes of the Reed-Solomon codes is an interesting and 
difficult  problem in coding theory and combinatorics. This problem has received significant attention 
in the literature (see, e.g., \cite{K,ZWK,ZCL}).  New results on deep holes of the Reed-Solomon codes can be 
applied to Theorem \ref{thm8} to make sure that the extended codes are MDS.

}
\end{remark}

\section{Applications of Theorem \ref{thm6} to the extended Reed-Solomon 
codes $\mathcal{C}_k({\bf a}, {\bf 1},\infty)$}\label{sec-special2}

In this section, we will consider the extended code $\overline{\mathcal{C}_k({\bf a}, {\bf 1},\infty)}({\bf u})$ 
and find some $\bu$'s such that $\overline{\mathcal{C}_k({\bf a}, {\bf 1},\infty)}({\bf u})$ is MDS. In addition, 
we will determine the covering radius of $\mathcal{C}_k({\bf a}, {\bf 1},\infty)^\perp$ in some cases.

\subsection{The Roth-Lempel codes}

In this subsection, we recall the Roth-Lempel codes, which are a class of MDS codes 
that are non-Reed-Solomon type under certain condition \cite{RL}.
 


\begin{definition}\label{defn2} {\rm \cite{RL} Let $n$ and $k$ be two integers such that  $4\le k+1\le n\le q$.  Let $\alpha_1, \ldots, \alpha_{n}$ be pairwise distinct elements of $\gf(q)$, $\delta\in \gf(q)$, and ${\bf a}=(\alpha_{1},\alpha_{2},\ldots,\alpha_{n})$. Then the $[n+2,k]$ {\em Roth-Lempel  code}
$RL({\bf a},k,\delta, n+2)$
over $\gf(q)$ is generated by the matrix 
\begin{equation}\label{eq8} 
\left( \begin{array}{cccccc}
1 & 1& \ldots &  1 & 0 &0\\
\alpha_1& \alpha_2& \ldots &  \alpha_{n} & 0 &0\\
\vdots& \vdots& \ldots & \vdots & \vdots &\vdots\\
\alpha_1^{k-3}& \alpha_2^{k-3}& \ldots &  \alpha_{n}^{k-3} & 0 &0\\
\alpha_1^{k-2}& \alpha_2^{k-2}& \ldots &  \alpha_{n}^{k-2} & 0 &1\\
\alpha_1^{k-1}& \alpha_2^{k-1}& \ldots &  \alpha_{n}^{k-1} & 1 &\delta\\
\end{array} \right).
\end{equation}
}
\end{definition}

The following lemma gives a sufficient and necessary condition for the Roth-Lempel  code to be MDS.

\begin{lemma}\label{lem11}{\rm \cite{RL}  The $[n+2,k]$  Roth-Lempel  code $RL({\bf a},k,\delta, n+2)$ 
with  $4\le k+1\le n\le q$
 over $\gf(q)$  is MDS  if and only if the set $\{\alpha_1, \ldots, \alpha_{n}\}$ is an $(n, k-1, \delta)$-set in $\gf(q)$, where a set $S\subseteq \gf(q)$ of size $n$ is called an $(n,k-1, \delta)$-set in $\gf(q)$ if no $(k-1)$ elements of $S$ sum to $\delta$.

}

\end{lemma}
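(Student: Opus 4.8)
The plan is to reduce the MDS property to a determinant condition on the generator matrix. Recall that a linear code with $k\times N$ generator matrix $G$ is MDS if and only if every $k\times k$ submatrix formed by choosing $k$ of the $N$ columns is nonsingular. Here $N=n+2$ and $G$ is the matrix in \eqref{eq8}. The first $n$ columns are the Vandermonde columns $(1,\alpha_i,\ldots,\alpha_i^{k-1})^{T}$, and the last two columns are $\mathbf{e}=(0,\ldots,0,1)^{T}$ (column $n+1$) and $\mathbf{f}=(0,\ldots,0,1,\delta)^{T}$ (column $n+2$), where $\mathbf{e}$ has its single nonzero entry in the last row and $\mathbf{f}$ has nonzero entries $1$ and $\delta$ in the rows indexed by the powers $k-2$ and $k-1$. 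I would classify the $k$-subsets of columns according to how many of $\{n+1,n+2\}$ they contain, giving four types, and compute the corresponding determinant in each.

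Three of the four types do not involve $\delta$, and I expect them to be nonsingular unconditionally. If all $k$ chosen columns lie among the first $n$, the submatrix is an ordinary $k\times k$ Vandermonde matrix in distinct nodes, hence nonsingular. If $\mathbf{e}$ is selected together with $k-1$ Vandermonde columns, a cofactor expansion along $\mathbf{e}$ collapses the determinant to a $(k-1)\times(k-1)$ Vandermonde, again nonzero. If both $\mathbf{e}$ and $\mathbf{f}$ are selected (together with $k-2$ Vandermonde columns), the column operation $\mathbf{f}\mapsto \mathbf{f}-\delta\mathbf{e}$ removes $\delta$ and turns the last two columns into the standard basis vectors sitting in the bottom two rows; Laplace expansion then reduces the determinant to a $(k-2)\times(k-2)$ Vandermonde. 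Thus $\delta$ disappears from all three types and nonsingularity holds without any hypothesis.

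The decisive type is when $\mathbf{f}$ is selected together with $k-1$ Vandermonde columns, say with nodes $\beta_1,\ldots,\beta_{k-1}$. Expanding the $k\times k$ determinant along the column $\mathbf{f}$ produces two minors: deleting the last row gives an ordinary Vandermonde $V=\prod_{i<j}(\beta_j-\beta_i)$, while deleting the row indexed by the power $k-2$ gives a generalized Vandermonde determinant whose exponent set is $\{0,1,\ldots,k-3,k-1\}$. The heart of the argument, and the step I expect to be the main obstacle, is the evaluation of this generalized Vandermonde: I would show that it equals $V\cdot(\beta_1+\cdots+\beta_{k-1})$, which one can read off from the Schur-polynomial identity $s_{(1)}=e_1$ for the one-box partition, or prove directly by elementary column reductions or by induction on $k$. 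Tracking the cofactor signs carefully, the $k\times k$ determinant then equals $V\bigl(\delta-\textstyle\sum_{j}\beta_j\bigr)$.

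Since $V\neq 0$ for distinct nodes, this last determinant is nonzero precisely when $\beta_1+\cdots+\beta_{k-1}\neq\delta$. Combining the four types, every $k\times k$ minor of $G$ is nonsingular if and only if no $k-1$ of the elements $\alpha_1,\ldots,\alpha_n$ sum to $\delta$, which is exactly the condition that $\{\alpha_1,\ldots,\alpha_n\}$ be an $(n,k-1,\delta)$-set. This would complete the proof.
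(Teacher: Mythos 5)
Your proof is correct: the classification of the $k$-column subsets into four types, the elimination of $\delta$ in the type with both appended columns via the operation $\mathbf{f}\mapsto\mathbf{f}-\delta\mathbf{e}$, and the evaluation of the generalized Vandermonde with exponent set $\{0,1,\ldots,k-3,k-1\}$ as $V\cdot(\beta_1+\cdots+\beta_{k-1})$ (Schur identity $s_{(1)}=e_1$) all check out, yielding the decisive minor $V\bigl(\delta-\sum_j\beta_j\bigr)$ and hence exactly the $(n,k-1,\delta)$-set criterion. Note that the paper itself offers no proof of this lemma---it is quoted from the Roth--Lempel reference \cite{RL}---and your argument is essentially the standard one given there, so there is no divergence to report.
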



\subsection{The covering radius of $\mathcal{C}_k({\bf a}, {\bf 1},\infty)^\perp$}

As an application of Theorem \ref{thm6}, we derive the covering radius of $\mathcal{C}_k({\bf a}, {\bf 1},\infty)^\perp$ 
under certain conditions in this subsection. 

It is easy to check that $RL({\bf a},k,\delta, n+2)=\widetilde{\mathcal{C}_k({\bf a}, {\bf 1},\infty)}(G_{k,\infty}, \bg)$ with $\bg=(0,\ldots, 0,1,\delta)^T$ in Equation \eqref{eqn-tildeGv}. 
We would find a vector $\bu$ such that 
$\overline{\mathcal{C}_k({\bf a}, {\bf 1},\infty)}(\bu)=\widetilde{\mathcal{C}_k({\bf a}, {\bf 1},\infty)}(G_{k,\infty}, \bg)$.

\begin{theorem}\label{thm12}{\rm  We have the equality  
$$RL({\bf a},k,\delta,n+2)=\overline{\mathcal{C}_k({\bf a}, {\bf 1},\infty)}({\bf u}),$$ 
where ${\bf u}=(u_1, u_2, \ldots, u_n, \delta-a)\in \gf(q)^{n+1}$, 
$a=\sum_{i=1}^n \alpha_i$ and  $$u_i=   \frac{\alpha_i^{n+1-k}}{\prod_{1\leq j\leq n,j\neq i}(\alpha_i-\alpha_j)}.$$  

}
\end{theorem}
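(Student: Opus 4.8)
The plan is to reduce the claimed code equality to a single matrix–vector identity, exactly as was done for Theorem \ref{thm7}. The excerpt has already recorded that $RL({\bf a},k,\delta,n+2)=\widetilde{\mathcal{C}_k({\bf a},{\bf 1},\infty)}(G_{k,\infty},\bg)$ with $\bg=(0,\ldots,0,1,\delta)^T$; that is, the Roth--Lempel code has generator matrix $(G_{k,\infty}\mid\bg)$, where $G_{k,\infty}$ is the matrix of \eqref{eq5} specialised to ${\bf v}={\bf 1}$. On the other hand, Lemma \ref{lem1} shows that $\overline{\mathcal{C}_k({\bf a},{\bf 1},\infty)}({\bf u})$ has generator matrix $(G_{k,\infty}\mid G_{k,\infty}{\bf u}^T)$. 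Hence the theorem reduces to verifying the single identity $G_{k,\infty}{\bf u}^T=\bg=(0,\ldots,0,1,\delta)^T$ for the prescribed vector ${\bf u}=(u_1,\ldots,u_n,\delta-a)$; once this holds the two generator matrices coincide and the two codes are literally equal.

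To evaluate $G_{k,\infty}{\bf u}^T$, I note that row $j$ of $G_{k,\infty}$ (for $1\le j\le k$) has entries $\alpha_i^{j-1}$ in columns $1\le i\le n$ and $\epsilon_j$ in column $n+1$, where $\epsilon_j=0$ for $j<k$ and $\epsilon_k=1$. Writing $w_i=1/\prod_{l\ne i}(\alpha_i-\alpha_l)$, so that $u_i=\alpha_i^{n+1-k}w_i$ for $1\le i\le n$, the $j$-th entry of $G_{k,\infty}{\bf u}^T$ equals $\sum_{i=1}^n\alpha_i^{\,n+j-k}w_i+\epsilon_j(\delta-a)$. The crux is therefore to pin down the weighted power sums $S_m:=\sum_{i=1}^n\alpha_i^m w_i$ at the exponents $m=n+j-k$, which (using $4\le k+1\le n$) range over $n+1-k,\ldots,n-2,n-1,n$ as $j$ runs from $1$ to $k$.

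For the exponents $m\le n-1$ I would reuse the mechanism from Theorem \ref{thm7}: the full Vandermonde matrix $G_n$ is invertible and ${\bf w}^T$ solves $G_n{\bf w}^T=(0,\ldots,0,1)^T$, whence $S_m=0$ for $0\le m\le n-2$ and $S_{n-1}=1$. This disposes of rows $j=1,\ldots,k-2$ (all giving $0$, since there $0\le n+j-k\le n-2$) and row $j=k-1$ (giving $S_{n-1}=1$). The only exponent exceeding $n-1$ is $m=n$, occurring in row $j=k$; there I would evaluate $\prod_{l}(x-\alpha_l)=0$ at each $\alpha_i$ to get $\alpha_i^n=e_1\alpha_i^{n-1}-e_2\alpha_i^{n-2}+\cdots$, and take the $w_i$-weighted sum to obtain $S_n=e_1 S_{n-1}-e_2 S_{n-2}+\cdots=e_1\cdot 1=a$. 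Row $k$ then contributes $S_n+(\delta-a)=a+(\delta-a)=\delta$, and assembling all rows yields $G_{k,\infty}{\bf u}^T=(0,\ldots,0,1,\delta)^T=\bg$.

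The main obstacle, and the only nonroutine step, is this evaluation of the weighted power sums $S_m$, in particular the top value $S_n=a$; everything else is bookkeeping about row indices. One could instead invoke the identity $S_m=h_{m-n+1}(\alpha_1,\ldots,\alpha_n)$, the complete homogeneous symmetric polynomial (with $h_{<0}=0$, $h_0=1$, $h_1=a$), but I would prefer the self-contained Newton-type recurrence above to keep the argument parallel to Theorem \ref{thm7} and avoid importing symmetric-function machinery. With $G_{k,\infty}{\bf u}^T=\bg$ in hand, the proof is complete.
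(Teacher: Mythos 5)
Your proposal is correct and follows essentially the same route as the paper's own proof: reduce via Lemma \ref{lem1} to the single identity $G_{k,\infty}{\bf u}^T=(0,\ldots,0,1,\delta)^T$, handle the first $k-1$ rows with the Vandermonde relation $G_n{\bf w}^T=(0,\ldots,0,1)^T$, and evaluate the top weighted power sum $\sum_{i=1}^n w_i\alpha_i^n=\sum_{i=1}^n\alpha_i=a$ by expanding $\prod_{j=1}^n(x-\alpha_j)$ at each $\alpha_i$. Your write-up is merely more explicit about the row-by-row bookkeeping than the paper, which compresses those steps into a reference to Equation \eqref{eq7}.
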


\begin{proof}  By Lemma \ref{lem1}, it suffices  to check that $G_{k,\infty}{\bf u}^T=(0, \ldots, 0,1, \delta)^T$. From Equation \eqref{eq7}, we only need to check that  $(\alpha_1^{k-1}, \ldots, \alpha_n^{k-1},1){\bf u}^T=\delta-a+\sum_{i=1}^nw_i\alpha_i^n$, where $w_i$ was given in \eqref{eq4} with $v_i=1$.

Let $m(x)=\prod_{j=1}^n(x-\alpha_j)=x^n+\sum_{j=0}^{n-1}a_jx^j$.  Note that $m(\alpha_i)=0$ and hence $\alpha_i^n=-\sum_{j=0}^{n-1}a_j\alpha_i^j$ for $1\le i\le n$.  By $G_n(w_1, \ldots, w_n)^T=(0, \ldots, 0 ,1)^T$, we have $$\sum_{i=1}^nw_i\alpha_i^n=-\sum_{j=0}^{n-1}a_j(\sum_{i=1}^{n}w_i\alpha_i^j)=-a_{n-1}=\sum_{i=1}^n\alpha_i=a.$$

This completes the proof.
\end{proof}


Combining Lemma \ref{lem11} and Theorems \ref{thm6} and \ref{thm12}, we obtain the following result. 

\begin{theorem}\label{thm612}
{\rm 
If there is a $\delta \in \gf(q)$ such that $\{\alpha_1, \ldots, \alpha_n\}$ is an $(n, k-1, \delta)$-set in $\gf(q)$, 
then 
$$
\rho( \mathcal{C}_k({\bf a}, {\bf 1},\infty)^\perp)=k 
$$
and the vector $\bu$ given in Theorem \ref{thm12} is a deep hole of $\mathcal{C}_k({\bf a}, {\bf 1},\infty)^\perp$. 
}
\end{theorem}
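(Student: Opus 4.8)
The plan is to obtain the result by chaining together Lemma \ref{lem11}, Theorem \ref{thm12}, and Theorem \ref{thm6}, with essentially no additional computation. First I would invoke the hypothesis directly: since $\{\alpha_1, \ldots, \alpha_n\}$ is an $(n, k-1, \delta)$-set in $\gf(q)$ (and the parameters satisfy $4 \le k+1 \le n \le q$, as required for the Roth-Lempel construction), Lemma \ref{lem11} tells us that the Roth-Lempel code $RL({\bf a}, k, \delta, n+2)$ is MDS.

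Next I would rewrite this Roth-Lempel code as a second-kind extended code. By Theorem \ref{thm12}, we have the identification $RL({\bf a}, k, \delta, n+2) = \overline{\mathcal{C}_k({\bf a}, {\bf 1},\infty)}({\bf u})$, where ${\bf u}$ is the explicit vector displayed in Theorem \ref{thm12}. Combining this with the previous step, the extended code $\overline{\mathcal{C}_k({\bf a}, {\bf 1},\infty)}({\bf u})$ is MDS.

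Finally I would apply Theorem \ref{thm6} to the code $\mathcal{C} := \mathcal{C}_k({\bf a}, {\bf 1},\infty)$, which is an $[n+1, k, n-k+2]$ MDS code, so the vector ${\bf u} \in \gf(q)^{n+1}$ indeed lies in the ambient space of the correct length for Theorem \ref{thm6} to apply. Since we have just shown that $\overline{\mathcal{C}}({\bf u})$ is MDS, the forward (``only if'') direction of Theorem \ref{thm6} yields at once both desired conclusions: $\rho(\mathcal{C}^\perp) = \rho(\mathcal{C}_k({\bf a}, {\bf 1},\infty)^\perp) = k$, and ${\bf u}$ is a deep hole of $\mathcal{C}^\perp = \mathcal{C}_k({\bf a}, {\bf 1},\infty)^\perp$.

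Because every step is a direct citation of a previously established result, the only point requiring care---and the mildest possible obstacle---is the index bookkeeping: one must keep in mind that the base code here has length $n+1$ rather than $n$, so that the dimension $k$ and the covering-radius value $k$ appearing in Theorem \ref{thm6} are read off correctly against the relabeled length. Once this is checked, the proof is immediate.
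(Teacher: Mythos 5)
Your proposal is correct and follows exactly the paper's own route: the paper proves Theorem \ref{thm612} precisely by combining Lemma \ref{lem11} (MDS criterion for the Roth-Lempel code), Theorem \ref{thm12} (identifying $RL({\bf a},k,\delta,n+2)$ with $\overline{\mathcal{C}_k({\bf a}, {\bf 1},\infty)}({\bf u})$), and the forward direction of Theorem \ref{thm6} applied to the $[n+1,k]$ MDS code $\mathcal{C}_k({\bf a}, {\bf 1},\infty)$. Your attention to the length bookkeeping (base code of length $n+1$, so $\bu \in \gf(q)^{n+1}$) and to the Roth-Lempel parameter constraints matches what the paper leaves implicit.
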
 


Whether Theorem \ref{thm612} can be used to determine the covering radius $\rho( \mathcal{C}_k({\bf a}, {\bf 1},\infty)^\perp)$ of the MDS code $ \mathcal{C}_k({\bf a}, {\bf 1},\infty)^\perp$ depends on if there is a $\delta \in \gf(q)$ such that $\{\alpha_1, \ldots, \alpha_n\}$ is an $(n, k-1, \delta)$-set in $\gf(q)$.  This is an interesting open 
problem.  The following examples show the existence of such $\delta$ and the usefulness of Theorem \ref{thm612} sometimes.  


\begin{example}\label{exa1}
{\rm 

Let $n=q=4$ and $k=3$. Then $\gf(4)$ is a $(4, 2, 0)$-set in $\gf(4)$ and
in fact $\mathcal{C}_3({\bf a}, {\bf 1},\infty)^\perp$ is a $[5,2,4]$ linear code over $\gf(4)$.
By Theorem \ref{thm612}, we have $\rho( \mathcal{C}_3({\bf a}, {\bf 1},\infty)^\perp)=3$, which is confirmed  by Magma.

}
\end{example}

\begin{example}\label{exa2}
{\rm Let $n=q=8$ and $k=3$. Then $\gf(8)$ is a $(8, 2, 0)$-set in $\gf(8)$ and
in fact $\mathcal{C}_3({\bf a}, {\bf 1},\infty)^\perp$ is a $[9,6,4]$ linear code over $\gf(8)$.
By Theorem \ref{thm612}, we have $\rho( \mathcal{C}_3({\bf a}, {\bf 1},\infty)^\perp)=3$, which is confirmed  by Magma.

}
\end{example}

\begin{example}\label{exa3}
{\rm Let $n=q=8$ and $k=4$. According to Magma,  there is no $\delta \in \gf(8)$ such that $\gf(8)$ is a $(8, 3, \delta)$-set in $\gf(8)$. Namely, for any $\delta \in \gf(8)$, we can always find three distinct elements of $\gf(8)$ whose sum is $\delta$.
 In fact $\mathcal{C}_4({\bf a}, {\bf 1},\infty)^\perp$ is a $[9,5,5]$ linear code over $\gf(8)$ with covering radius  3.
}
\end{example}

\subsection{Some $\bu$'s such that  $\overline{\mathcal{C}_k({\bf a}, {\bf 1},\infty)}(\bu)$ is MDS via deep holes}

In this subsection, we will find some $\bu$'s such that  $\overline{\mathcal{C}_k({\bf a}, {\bf 1},\infty)}(\bu)$ is MDS via deep holes. 
By Theorem \ref{thm6} and Lemma \ref{lem3}, we have the following corollary.

\begin{corollary}\label{cor13}{\rm Let  $k\le n \le q$ and ${\bf a}=(\alpha_1,\alpha_2,\ldots, \alpha_n)\in \gf(q)^n$, where  $\alpha_1,\alpha_2,\ldots, \alpha_n$ are pairwise distinct. 
  For a vector ${\bf u}=(u_1, u_2, \ldots,u_{n+1})\in \gf(q)^{n+1}$, the extended code $\overline{\mathcal{C}_k({\bf a}, {\bf 1},\infty)}({\bf u})$ is MDS with parameters  $[n+2, k, n+3-k]$ if and only if $\rho(\mathcal{C}_{n+1-k}({\bf a}, {\bf w},\infty))=k$ and  ${\bf u}$ is a deep hole of the code $\mathcal{C}_{n+1-k}({\bf a}, {\bf w},\infty)$, where  ${\bf w}=(w_1,w_2,\ldots, w_n)$ and for $i=1,2,\ldots,n$, $$w_i=\frac{1}{\prod_{1\leq j\leq n,j\neq i}(\alpha_i-\alpha_j)}.$$
}
\end{corollary}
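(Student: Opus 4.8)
The plan is to read Corollary \ref{cor13} as a direct specialization of Theorem \ref{thm6} to the extended Reed--Solomon code $\mathcal{C}_k({\bf a}, {\bf 1},\infty)$, combined with the explicit description of its dual supplied by Lemma \ref{lem3}. First I would observe that $\mathcal{C}_k({\bf a}, {\bf 1},\infty)$ is an $[n+1, k, n-k+2]$ MDS code over $\gf(q)$, namely the EGRS code of Definition \ref{def2} with ${\bf v} = {\bf 1}$. Viewing this code as the ``$[N,k]$ MDS code'' appearing in Theorem \ref{thm6}, with length $N = n+1$, the theorem applies verbatim: for any ${\bf u} \in \gf(q)^{n+1}$, the extended code $\overline{\mathcal{C}_k({\bf a}, {\bf 1},\infty)}({\bf u})$ is MDS if and only if $\rho(\mathcal{C}_k({\bf a}, {\bf 1},\infty)^\perp) = k$ and ${\bf u}$ is a deep hole of $\mathcal{C}_k({\bf a}, {\bf 1},\infty)^\perp$. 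Since $\overline{\mathcal{C}_k({\bf a}, {\bf 1},\infty)}({\bf u})$ has length $n+2$ and dimension $k$ (the generator matrix $(G \mid G{\bf u}^T)$ of \eqref{eq1} always has rank $k$), being MDS forces its minimum distance to equal $(n+2) - k + 1 = n+3-k$, which matches the parameters $[n+2,k,n+3-k]$ claimed in the statement.

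Next I would make the dual code explicit. By Lemma \ref{lem3}, $\mathcal{C}_k({\bf a}, {\bf 1},\infty)^\perp = \mathcal{C}_{n+1-k}({\bf a}, {\bf w},\infty)$, where the weights $w_i$ are those of Lemma \ref{lem2}, namely $w_i = 1/\bigl(v_i \prod_{1 \le j \le n,\, j \neq i}(\alpha_i - \alpha_j)\bigr)$ as given in \eqref{eq4}. Setting $v_i = 1$ specializes this to $w_i = 1/\prod_{1 \le j \le n,\, j \neq i}(\alpha_i - \alpha_j)$, which is precisely the expression appearing in the corollary. Substituting this identification of the dual code into the equivalence delivered by Theorem \ref{thm6} yields the stated result directly.

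Since both ingredients are already in hand, there is no substantive obstacle here; the corollary is essentially a bookkeeping specialization. The only point requiring genuine care is keeping the length indices straight: the base code $\mathcal{C}_k({\bf a}, {\bf 1},\infty)$ has length $n+1$ rather than $n$, so its dual $\mathcal{C}_{n+1-k}({\bf a}, {\bf w},\infty)$ has dimension $n+1-k$, and the covering-radius condition from Theorem \ref{thm6} reads $\rho = k$ (the dimension of the original code) rather than $\rho = n-k$. Tracking these shifted indices correctly, and confirming that the deep-hole vector lives in $\gf(q)^{n+1}$ as the statement requires, is all that separates the two cited results from the conclusion.
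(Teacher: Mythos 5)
Your proposal is correct and matches the paper's own derivation exactly: the paper obtains Corollary \ref{cor13} precisely by applying Theorem \ref{thm6} to the $[n+1,k,n-k+2]$ MDS code $\mathcal{C}_k({\bf a}, {\bf 1},\infty)$ and identifying its dual as $\mathcal{C}_{n+1-k}({\bf a}, {\bf w},\infty)$ via Lemma \ref{lem3} with ${\bf v}={\bf 1}$. Your attention to the shifted length index ($N=n+1$, so ${\bf u}\in\gf(q)^{n+1}$ and the condition is $\rho=k$) is exactly the bookkeeping the specialization requires.
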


We will find some deep holes of  $\mathcal{C}_{n+1-k}({\bf a}, {\bf w},\infty)$ by utilizing some  results on  deep holes of $\mathcal{C}_{n-k}({\bf a}, {\bf w})$ in Corollary \ref{cor10}. By Corollary \ref{cor13}, these deep holes of  $\mathcal{C}_{n+1-k}({\bf a}, {\bf w},\infty)$ give MDS codes $\overline{\mathcal{C}_k({\bf a}, {\bf 1},\infty)}({\bf u})$, 
provided that $\rho(\mathcal{C}_{n+1-k}({\bf a}, {\bf w},\infty))=k$.

\begin{theorem}{\rm  \label{thm14} 

Assume that $\rho(\mathcal{C}_{n+1-k}({\bf a}, {\bf w},\infty))=k$.

(1) Suppose that ${\bf u}$ is a deep hole of $\mathcal{C}_{n-k}({\bf a}, {\bf w})$ generated by functions which are equivalent to the following:
$$f(x)=x^{n-k}\cdot(\sum_{i=1}^nw_i\prod_{j=1, j\neq i}^n \frac{x-\alpha_j}{\alpha_i-\alpha_j}).$$   For   $\delta\in \gf(q)$, 
$$\bg=(\alpha_1f(\alpha_1), \ldots, \alpha_nf(\alpha_n),\delta),$$ is a deep hole of 
$\mathcal{C}_{n+1-k}({\bf a}, {\bf w},\infty)$  if and only if  $-\delta \notin S_{n+1-k}$, where 
\begin{eqnarray}\label{eqs1} S_{n+1-k}=\bigg\{\sum_{i\in I}\alpha_i: \forall I\subseteq \{1,\ldots,n\} \mbox{ and } |I|=n+1-k \bigg\}.\end{eqnarray}



(2)  Suppose that ${\bf u}$ is a deep hole of $\mathcal{C}_{n-k}({\bf a}, {\bf w})$ generated by functions which are equivalent to the following:
 $$f_{\pi}(x)=\frac{1}{x-\pi}\cdot(\sum_{i=1}^nw_i\prod_{j=1, j\neq i}^n \frac{x-\alpha_j}{\alpha_i-\alpha_j}).$$  For   $\delta\in \gf(q)$, 
$$\bg=(f_{\pi}(\alpha_1), \ldots, f_{\pi}(\alpha_n),\delta),$$ is a deep hole of 
$\mathcal{C}_{n+1-k}({\bf a}, {\bf w},\infty)$  if and only if  $\delta \notin T_{\pi,n+1-k}$, where
\begin{eqnarray}\label{eqs2} T_{\pi,n+1-k}=\bigg\{\frac{1}{\prod_{i\in I}(\pi-\alpha_{i})}: \forall I\subseteq \{1,\ldots,n\} \mbox{ and } |I|=n+1-k \bigg\}.\end{eqnarray}


}
\end{theorem}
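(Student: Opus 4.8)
The plan is to invoke Lemma \ref{lem5}(2). Since the hypothesis supplies $\rho(\mathcal{C}_{n+1-k}({\bf a},{\bf w},\infty))=k$, the vector $\bg$ is a deep hole of this $[n+1,n+1-k]$ MDS code if and only if adjoining $\bg$ as an extra row to a generator matrix yields a matrix, every $n+2-k$ of whose columns are linearly independent, that is, a matrix generating an $[n+1,n+2-k]$ MDS code. I would take the generator matrix $H_{n+1-k,\infty}$ of \eqref{eq6}, whose $j$-th row is $(w_1\alpha_1^{j},\dots,w_n\alpha_n^{j})$ with $\infty$-entry equal to $0$ for $j<n-k$ and to $-1$ for $j=n-k$, where $j=0,1,\dots,n-k$. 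Writing $W(x)=\sum_i w_i\prod_{l\neq i}\frac{x-\alpha_l}{\alpha_i-\alpha_l}$ for the interpolation polynomial with $W(\alpha_i)=w_i$, the first $n$ entries of $\bg$ are $\alpha_i f(\alpha_i)=w_i\alpha_i^{\,n+1-k}$ in part (1) and $f_\pi(\alpha_i)=w_i/(\alpha_i-\pi)$ in part (2), so that in each case $\bg$ corresponds to raising the degree of the deep-hole generator from Corollary \ref{cor10} by one and appending the coordinate $\delta$.

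I would then test the MDS property by running through every choice of $n+2-k$ columns of the enlarged matrix, separating the case where the $\infty$-column is excluded from the case where it is included. When the $\infty$-column is excluded, factoring out the nonzero weights $w_i$ reduces the minor to a Vandermonde determinant in $n+2-k$ distinct points: in part (1) the exponents are the consecutive $0,1,\dots,n+1-k$, and in part (2), after multiplying each column by $(\alpha_i-\pi)$ and row-reducing against the resulting all-ones row, the exponents again become consecutive. Such determinants are nonzero (using $\pi\notin\{\alpha_1,\dots,\alpha_n\}$ in part (2)), so this case imposes no condition on $\delta$.

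The decisive case is when the $\infty$-column is included. Fixing $I\subseteq\{1,\dots,n\}$ with $|I|=n+1-k$, I would expand the $(n+2-k)\times(n+2-k)$ determinant along the $\infty$-column, whose only nonzero entries are the $-1$ in the degree-$(n-k)$ row and the $\delta$ in the appended row $\bg$. Evaluating the two cofactor signs gives $\det=M_1+\delta M_2$, where $M_2=\big(\prod_{i\in I}w_i\big)V_I$ with $V_I\neq 0$ the Vandermonde determinant of $\{\alpha_i:i\in I\}$, and $M_1$ is a cofactor with the degree-$(n-k)$ row removed. In part (1), $M_1$ is the generalized Vandermonde with exponent set $\{0,\dots,n-k-1,\,n+1-k\}$, which equals $\big(\prod_{i\in I}w_i\big)V_I\sum_{i\in I}\alpha_i$ (a Vandermonde times the first elementary symmetric polynomial), so $\det=\big(\prod_{i\in I}w_i\big)V_I\big(\delta+\sum_{i\in I}\alpha_i\big)$ is nonzero exactly when $-\delta\neq\sum_{i\in I}\alpha_i$. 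In part (2), the appended row has entries $w_i/(\alpha_i-\pi)$; clearing denominators and row-reducing evaluates $M_1$ as $\big(\prod_{i\in I}w_i\big)V_I\,(-1)^{n-k}\big/\prod_{i\in I}(\alpha_i-\pi)$, and using $\prod_{i\in I}(\alpha_i-\pi)=(-1)^{n+1-k}\prod_{i\in I}(\pi-\alpha_i)$ shows $\det\neq0$ exactly when $\delta\neq 1/\prod_{i\in I}(\pi-\alpha_i)$.

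Finally, quantifying over all $I$ with $|I|=n+1-k$, the enlarged matrix generates an MDS code if and only if the above condition holds for every such $I$, that is, $-\delta\notin S_{n+1-k}$ in part (1) and $\delta\notin T_{\pi,n+1-k}$ in part (2); together with Lemma \ref{lem5}(2) this establishes both equivalences. I expect the main obstacle to be the exact evaluation of the cofactor $M_1$ — the generalized Vandermonde in part (1) and the rational-row determinant in part (2) — together with the careful bookkeeping of the cofactor signs and the stray powers of $-1$, since it is precisely these signs that convert the naive vanishing conditions into the stated sets $S_{n+1-k}$ in \eqref{eqs1} and $T_{\pi,n+1-k}$ in \eqref{eqs2}, and in particular account for the negation $-\delta$ appearing in part (1).
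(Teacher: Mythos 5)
Your proposal is correct and follows essentially the same route as the paper's proof: Lemma \ref{lem5}(2) applied to the generator matrix $H_{n+1-k,\infty}$, a case split on whether the $\infty$-column is among the chosen $n+2-k$ columns, a Vandermonde argument (after scaling by the $w_i$, and by $\alpha_i-\pi$ in part (2)) for the excluded case, and an exact evaluation of the mixed determinant in the included case, quantified over all index sets $I$ of size $n+1-k$. The only cosmetic difference is that you evaluate the two cofactors of the $\infty$-column directly (via the generalized Vandermonde/Schur identity with exponent set $\{0,\dots,n-k-1,n+1-k\}$, and by clearing denominators in part (2)), whereas the paper obtains the same quantities as the coefficients $A_{n-k}$ and $A_{n+1-k}$ of the polynomial $\det B(x)=\prod_{1\le i<j\le n+1-k}(\alpha_j-\alpha_i)\prod_{i}(x-\alpha_i)$ and, in part (2), by adjoining $\pi$ as an extra interpolation point.
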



\begin{proof}   

The code $\mathcal{C}_{n+1-k}({\bf a}, {\bf w}, \infty)$ has generator matrix $H_{n+1-k,\infty}$ in \eqref{eq6} for $v_i=1$ with $1\le i\le n$.
By Lemma \ref{lem5} (2), ${\bg}$ is a deep hole of $\mathcal{C}_{n+1-k}({\bf a}, {\bf w}, \infty)$ if and only if $\bigg( \frac{H_{n+1-k,\infty}}{\bg}\bigg)$ generates an MDS code.

(1) When ${\bg}=(\alpha_1f(\alpha_1), \ldots, \alpha_nf(\alpha_n),\delta)$,  it suffices to prove that the submatrix consisting of any $(n+2-k)$ columns of the following matrix is nonsingular: 
\begin{equation}\label{eqHg}
\bigg( \frac{H_{n+1-k,\infty}}{\bg}\bigg)=\left(\begin{array}{ccccc}
 w_1\alpha_1^0 &w_2\alpha_2^0 &\ldots  &w_n\alpha^0_{n}&0\\ 
 w_1\alpha_1^1 &w_2\alpha_2^1 &\ldots & w_n\alpha^1_{n}&0\\ 
 \vdots &\vdots &\ddots&\vdots&\vdots \\ 
 w_1\alpha_1^{n-k-1} &w_2\alpha_2^{n-k-1} &\ldots & w_n\alpha^{n-k-1}_{n}&0\\ w_1\alpha_1^{n-k} &w_2\alpha_2^{n-k} &\ldots & w_n\alpha^{n-k}_{n}&-1\\
 w_1\alpha_1^{n+1-k} &w_2\alpha_2^{n+1-k} &\ldots & w_n\alpha^{n+1-k}_{n}&\delta\\
\end{array}\right).\end{equation}

If those columns from \eqref{eqHg} do not containing the vector $(0,\ldots, 0, -1, \delta)^T$, then the result holds because each induced submatrix is  Vandermonde and $\alpha_i$ are distinct. Then we consider the other case, without loss of generality we need to compute the determinant of the matrix:

\begin{equation*}\label{eq10}
B=\left(\begin{array}{ccccc}
 1 &1 &\ldots  &1&0\\ 
 \alpha_1 &\alpha_2 &\ldots & \alpha_{n+1-k}&0\\ 
 \vdots &\vdots &\ddots&\vdots&\vdots \\ 
 \alpha_1^{n-k-1} &\alpha_2^{n-k-1} &\ldots & \alpha^{n-k-1}_{n+1-k}&0\\ 
 \alpha_1^{n-k} &\alpha_2^{n-k} &\ldots & \alpha^{n-k}_{n+1-k}&-1\\
 \alpha_1^{n+1-k} &\alpha_2^{n+1-k} &\ldots & \alpha^{n+1-k}_{n+1-k}&\delta\\
\end{array}\right).\end{equation*} Let $A_i$ be the coefficient of $x^i$ in the polynomial $\mbox{det}(B(x))$, where
\begin{equation*}\label{eq11}
B(x)=\left(\begin{array}{ccccc}
 1 &1 &\ldots  &1&1\\ 
 \alpha_1 &\alpha_2 &\ldots & \alpha_{n+1-k}&x\\ 
 \vdots &\vdots &\ddots&\vdots&\vdots \\ 
 \alpha_1^{n+1-k} &\alpha_2^{n+1-k} &\ldots & \alpha^{n+1-k}_{n+1-k}&x^{n+1-k}\\
\end{array}\right).\end{equation*}
Then $\mbox{det}(B)=\delta A_{n+1-k}-A_{n-k}$. Note that $$\sum_{i=0}^{n+1-k}A_ix^i=\prod_{1\le i<j\le n+1-k}(\alpha_j-\alpha_i) \prod_{i=1}^{n+1-k}(x-\alpha_i).$$
Thus $A_{n+1-k}\neq 0$, $A_{n-k}=-A_{n+1-k}\sum_{i=1}^{n+1-k}\alpha_i$ and $\mbox{det}(B)\neq 0$ if and only if $\sum_{i=1}^{n+1-k}\alpha_i\neq -\delta.$ 

(2) When ${\bg}=(f_{\pi}(\alpha_1), \ldots, f_{\pi}(\alpha_n),\delta)$,  it suffices to prove that the submatrix consisting of any $(n+2-k)$ columns of the following matrix is nonsingular: 
\begin{equation}\label{eqHg2}
\bigg( \frac{H_{n+1-k,\infty}}{\bg}\bigg)=\left(\begin{array}{ccccc}
 w_1\alpha_1^0 &w_2\alpha_2^0 &\ldots  &w_n\alpha^0_{n}&0\\ 
 w_1\alpha_1^1 &w_2\alpha_2^1 &\ldots & w_n\alpha^1_{n}&0\\ 
 \vdots &\vdots &\ddots&\vdots&\vdots \\ 
 w_1\alpha_1^{n-k-1} &w_2\alpha_2^{n-k-1} &\ldots & w_n\alpha^{n-k-1}_{n}&0\\ w_1\alpha_1^{n-k} &w_2\alpha_2^{n-k} &\ldots & w_n\alpha^{n-k}_{n}&-1\\
 \frac{w_1}{\alpha_1-\pi} &\frac{w_1}{\alpha_1-\pi} &\ldots & \frac{w_n}{\alpha_n-\pi}&\delta\\
\end{array}\right).\end{equation}

If those columns from \eqref{eqHg2} do not containing the vector $(0,\ldots, 0, -1, \delta)^T$, then the result holds because each induced submatrix is  Vandermonde and $\alpha_i$ are distinct. Then we consider the other case, without loss of generality we need to compute the determinant of the   matrix:

\begin{equation*}\label{eq112}
B=\left(\begin{array}{ccccc}
\frac{1}{\alpha_1-\pi} &\frac{1}{\alpha_2-\pi}&\ldots & \frac{1}{\alpha_{n+1-k}-\pi}&\delta\\
 1 &1 &\ldots  &1&0\\ 
 \alpha_1 &\alpha_2 &\ldots & \alpha_{n+1-k}&0\\ 
 \vdots &\vdots &\ddots&\vdots&\vdots \\ 
 \alpha_1^{n-k-1} &\alpha_2^{n-k-1} &\ldots & \alpha^{n-k-1}_{n+1-k}&0\\ 
 \alpha_1^{n-k} &\alpha_2^{n-k} &\ldots & \alpha^{n-k}_{n+1-k}&-1\\
\end{array}\right).\end{equation*} 
Then \begin{equation*}
\prod_{i=1}^{n+1-k}\frac{1}{\alpha_i-\pi}\mbox{det}(B)=\left|\begin{array}{ccccc}
1 &1 &\ldots  &1&\delta\\ 
 \alpha_1 &\alpha_2 &\ldots & \alpha_{n+1-k}&\pi\delta\\ 
 \vdots &\vdots &\ddots&\vdots&\vdots \\ 
 \alpha_1^{n-k} &\alpha_2^{n-k} &\ldots & \alpha^{n-k}_{n+1-k}&\pi^{n-k}\delta\\
  \alpha_1^{n+1-k} &\alpha_2^{n+1-k} &\ldots & \alpha^{n+1-k}_{n+1-k}&\pi^{n+1-k}\delta-1\\
\end{array}\right|.\end{equation*} 
Let $\prod_{i=1}^{n+1-k}\frac{1}{\alpha_i-\pi}\mbox{det}(B)=\delta\mbox{det}(B_1)+\mbox{det}(B_2)$, where $$B_1=\left(\begin{array}{ccccc}
1  &\ldots  &1&1\\
 \alpha_1 &\ldots & \alpha_{n+1-k}&\pi\\ 
 \vdots &\ddots&\vdots&\vdots \\ 
 \alpha_1^{n-k}  &\ldots & \alpha^{n-k}_{n+1-k}&\pi^{n-k}\\
  \alpha_1^{n+1-k}  &\ldots & \alpha^{n+1-k}_{n+1-k}&\pi^{n+1-k}\\
\end{array}\right) \mbox{ and }B_2=\left(\begin{array}{ccccc}
1  &\ldots  &1&0\\ 
 \alpha_1  &\ldots & \alpha_{n+1-k}&0\\ 
 \vdots &\ddots&\vdots&\vdots \\ 
 \alpha_1^{n-k} &\ldots & \alpha^{n-k}_{n+1-k}&0\\
  \alpha_1^{n+1-k}  &\ldots & \alpha^{n+1-k}_{n+1-k}&-1\\
\end{array}\right).$$
Note that $$\mbox{det}(B_1)=\prod_{1\le i<j\le n+1-k}(\alpha_j-\alpha_i) \prod_{i=1}^{n+1-k}(\pi-\alpha_i)$$ and $$\mbox{det}(B_2)=-\prod_{1\le i<j\le n+1-k}(\alpha_j-\alpha_i) .$$
Therefore $\mbox{det}(B) \neq 0$ if and only if $ \delta\cdot\prod_{i=1}^{n+1-k}(\pi-\alpha_i)\neq  1$.
This completes the proof. 
%
%
%
\end{proof}

\begin{remark}{\rm By Corollary \ref{cor13} and Theorem \ref{thm14} (1), we have  $G_{k,\infty} {\bg}^T=(0, \ldots, 0, 1, \delta+a)^T$ with $a=\sum_{i=1}^n\alpha_i$. By  Lemma \ref{lem11}, the Roth-Lempel  code generated by the matrix in \eqref{eq8} (replace $\delta$ by $\delta+a$ )  is  MDS  if and only if the set $\{\alpha_1, \ldots, \alpha_{n}\}$ is an $(n, k-1, \delta+a)$-set in $\gf(q)$, 
 namely in the set $\{\alpha_1, \ldots, \alpha_n\}$  no $(k-1)$ elements  sum to $\delta+a$. On  the other hand, that also means the set $\{\alpha_1, \ldots, \alpha_{n}\}$ is an $(n, n+1-k, -\delta)$-set in $\gf(q)$, 
which is consistent with the condition  in  Theorem \ref{thm14} (1).


}
\end{remark}

\begin{remark}{\rm In Theorem 14 (2) the condition $\delta \notin T_{\pi,n+1-k}$ may be easily met in some cases. 
For example,  $\delta=0$ or 
if the set $\{\alpha_1, \ldots, \alpha_n\}$ properly is contained  to some subfield $\gf(s)$ of $\gf(q)$ and $\pi \in \gf(s)\backslash \{\alpha_1, \ldots, \alpha_n\}$, and $\delta\in \gf(q)\backslash \gf(s)$, then the condition holds.

}
\end{remark}

By Corollary \ref{cor10}, there are only two kinds of deep holes of 
$\mathcal{C}_{n-k}({\bf a}, {\bf w})$. Hence we have the following conjecture and  the reader is cordially invited to attack it.

\begin{conjecture}\label{conj} {\rm Let $q$ be a power of a prime,  $(q-1)/2\le n-k$, and $k<n\le q$. For a vector ${\bf u}=(u_1, u_2, \ldots,u_{n+1})\in \gf(q)^{n+1}$, the extended code $\overline{\mathcal{C}_k({\bf a}, {\bf 1},\infty)}({\bf u})$ is MDS if and only if  $\rho(\mathcal{C}_{n+1-k}({\bf a}, {\bf w},\infty))=k$ and  $\bu$ has one of the following forms: 

(1) $\bu=(\alpha_1f(\alpha_1), \ldots, \alpha_nf(\alpha_n),\delta)$ with $-\delta\notin S_{n+1-k}$, where 
 $f(x)$ was given in Theorem \ref{thm14} (1) and   $S_{n+1-k}$ was defined  in \eqref{eqs1}.
 

(2) $\bu=(f_{\pi}(\alpha_1), \ldots, f_{\pi}(\alpha_n),\delta)$ with $\delta\notin T_{\pi, n+1-k}$, where  $f_{\pi}(x)$ was given in Theorem \ref{thm14} (2) and  $T_{\pi, n+1-k}$ was defined  in \eqref{eqs2}.


}
\end{conjecture}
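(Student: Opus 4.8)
The plan is to first turn the conjecture, via Corollary \ref{cor13}, into a pure classification problem. Since $\overline{\mathcal{C}_k({\bf a},{\bf 1},\infty)}({\bf u})$ is MDS exactly when $\rho(\mathcal{C}_{n+1-k}({\bf a},{\bf w},\infty))=k$ and ${\bf u}$ is a deep hole of the extended GRS code $\mathcal{C}_{n+1-k}({\bf a},{\bf w},\infty)$, the conjecture is equivalent to asserting that, in the regime $(q-1)/2\le n-k$, the deep holes of $\mathcal{C}_{n+1-k}({\bf a},{\bf w},\infty)$ are \emph{precisely} the vectors of forms (1) and (2). The sufficiency half — that these vectors are genuine deep holes under the stated constraints — is already established by Theorem \ref{thm14} together with Corollary \ref{cor13}. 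Thus the entire content left open is the converse: that no further deep holes exist.

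For the converse I would reduce to the shortened code. Shortening $\mathcal{C}_{n+1-k}({\bf a},{\bf w},\infty)$ at the coordinate $\infty$ returns the GRS code $\C':=\mathcal{C}_{n-k}({\bf a},{\bf w})$, whose covering radius equals $k$ by Lemma \ref{lem2} and Theorem \ref{thm67}. Writing a received word as $({\bf u}',\delta)$ and splitting codewords according to their last coordinate $t$ (the leading coefficient of the underlying polynomial), one obtains
$$d\big(({\bf u}',\delta),\,\mathcal{C}_{n+1-k}({\bf a},{\bf w},\infty)\big)=\min_{t\in\gf(q)}\left(d({\bf u}'-t{\bf b},\C')+\epsilon_t\right),$$
where ${\bf b}=(w_1\alpha_1^{n-k},\ldots,w_n\alpha_n^{n-k})$, and $\epsilon_t=0$ if $t=\delta$ while $\epsilon_t=1$ otherwise. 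Since $\rho(\C')=\rho(\mathcal{C}_{n+1-k}({\bf a},{\bf w},\infty))=k$, this shows that $({\bf u}',\delta)$ is a deep hole if and only if ${\bf u}'-\delta{\bf b}$ is a deep hole of $\C'$ and $d({\bf u}'-t{\bf b},\C')\ge k-1$ for every $t\neq\delta$. I would then feed the classification of the deep holes of $\C'$ from Corollary \ref{cor10} (valid exactly because $(q-1)/2\le n-k$) into the first condition, while the second condition should collapse, through the same Vandermonde determinant computations used in the proof of Theorem \ref{thm14}, into the membership constraints $-\delta\notin S_{n+1-k}$ and $\delta\notin T_{\pi,n+1-k}$.

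The hard part, and the reason the statement is only conjectured, is that this reduction does not close cleanly against Corollary \ref{cor10}. The shift by the fixed direction ${\bf b}$ raises the interpolation degree, so the deep hole ${\bf u}'-\delta{\bf b}$ of $\C'$ produced by a genuine deep hole of the extended code can have interpolant degree as large as $n+1-k$ — strictly above the degree $n-k$ of the canonical family $x^{n-k}$, and distinct from the rational family $1/(x-\pi)$. These intermediate-degree deep holes of $\C'$ are exactly the objects detected by the subset-sum sets $S_{n+1-k}$ and $T_{\pi,n+1-k}$: they cease to be deep holes, and the reduction closes, precisely when the relevant element lies outside those sets. Reconciling this with Corollary \ref{cor10} therefore forces one to understand when $S_{n+1-k}$ and $T_{\pi,n+1-k}$ fail to be all of $\gf(q)$, a restricted subset-sum covering question (of Cauchy--Davenport / Dias da Silva--Hamidoune type) in the high-rate range $n-k\ge (q-1)/2$.

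I expect the principal obstacle to be controlling these mixed, intermediate-degree deep holes uniformly: one must rule out sporadic deep holes of $\mathcal{C}_{n+1-k}({\bf a},{\bf w},\infty)$ that blend the two families, and this requires additive-combinatorial information about the evaluation points $\alpha_i$ that goes beyond the error-distance bookkeeping above. A viable alternative is to bypass the shortened-code classification altogether and argue directly from the parity-check description: since $\mathcal{C}_{n+1-k}({\bf a},{\bf w},\infty)^{\bot}=\mathcal{C}_k({\bf a},{\bf 1},\infty)$ has generator matrix $G_{k,\infty}$, Lemma \ref{lem5}(1) recasts the deep-hole condition as the requirement that $G_{k,\infty}{\bf u}^{T}$ not be a linear combination of any $k-1$ columns of $G_{k,\infty}$; translating this into a constraint on the interpolating polynomial of ${\bf u}$ and analysing its top coefficients should, with care, force ${\bf u}$ into one of the two prescribed shapes, though the same sum-set subtleties are likely to resurface here as well.
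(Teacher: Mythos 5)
You are reviewing an attempt at Conjecture \ref{conj}. The first thing to say is that the paper does not prove this statement: it is posed as an open conjecture and the reader is ``cordially invited to attack it,'' so there is no proof of record to compare against, and the only question is whether your argument settles it. It does not, as you yourself concede. What you do get right: by Corollary \ref{cor13} the conjecture is equivalent to classifying the deep holes of $\mathcal{C}_{n+1-k}({\bf a}, {\bf w},\infty)$; the ``if'' direction is exactly Theorem \ref{thm14}; and your decomposition of the distance by splitting codewords according to the coefficient $t$ of $x^{n-k}$ is sound, except for a sign you must track --- with the paper's convention \eqref{eq6} (last column $(0,\ldots,0,-1)^T$) the distinguished shift is $t=-\delta$, not $t=\delta$, which is why the paper's condition in form (1) reads $-\delta\notin S_{n+1-k}$ rather than $\delta\notin S_{n+1-k}$.

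The genuine flaw is in your diagnosis of why the converse is hard. You blame ``intermediate-degree deep holes of $\C'$'' with interpolant degree up to $n+1-k$. But in the conjecture's regime $\dim \C' = n-k\ge (q-1)/2$, so Kaipa's classification (Lemma \ref{lem9}) applies \emph{unconditionally} to $\C'$: every deep hole of $\C'$ is, modulo $\C'$ and scaling, in the $x^{n-k}$ family or the $1/(x-\pi)$ family, and for $k>2$ (so that $n+1-k<n-1$, the degree of the interpolants of the $1/(x-\pi)$ family) a vector with interpolant of degree $n+1-k$ lies in neither family and hence is not a deep hole of $\C'$ at all. The step you never take --- the natural continuation of your own reduction --- is to apply Lemma \ref{lem9} to the shifted vector ${\bf u}'+\delta{\bf b}$ in your condition (a) and push the two families through your condition (b): a shift landing in the $x^{n-k}$ family always violates (b) (choose $t$ to cancel the leading coefficient, giving distance $0$), while a shift landing in the $1/(x-\pi)$ family always satisfies (b) (the auxiliary polynomial $a+(x-\pi)(h(x)-sx^{n-k}-p(x))$ has degree $n+1-k$, so agreement is at most $n+1-k$). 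For $k>2$ and $n<q$ this pins the deep holes of the extended code down to the cosets, modulo the code and scaling, of the vectors $\bigl((w_i/(\alpha_i-\pi))_i,\,0\bigr)$, and as a by-product forces $S_{n+1-k}=\gf(q)$ (compare Corollary \ref{cor:set}), so that form (1) becomes vacuous there. The genuinely delicate points are elsewhere: the boundary case $k=2$, where $n+1-k=n-1$ and form (1) secretly collapses into the $1/(x-\pi)$ family (so $S_{n+1-k}\neq\gf(q)$ can occur); the case $n=q$, where no $\pi$ exists; and the fact that the conjecture must be read modulo the code, since deep holes form a union of cosets while the literal sets (1)--(2) with their $\delta$-conditions are not coset-closed. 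So your plan is workable, but the obstacle is not where you place it, and the decisive application of Lemma \ref{lem9} to the shifted vectors is missing from your write-up.
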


\section{The covering radii of MDS codes}\label{sec-special3} 

As an application of Theorem \ref{thm6}, in this section we will determine the covering radii of some  MDS codes. This will further demonstrate the usefulness and importance of Theorem \ref{thm6}. 

\subsection{Covering radii of projective Reed-Solomon codes}

In this subsection, we will determine the covering radii of some projective Reed-Solomon codes. 
Determining the covering radius of a given linear code over a finite field is generally a challenging task. The covering radius is a basic geometric parameter of a code and has been widely investigated for distinguished families of error-correcting codes, including Reed-Solomon codes and Reed-Muller codes, see \cite{BGP}. For an $[n,k]$ MDS code $\C$, it is known that the covering radius $\rho(\C)$ is either $n-k$ or $n-k-1$ \cite{BGP}. 
In \cite{ZWK}  it is mentioned that the covering radius of a $q$-ary $[n, k]$ Reed-Solomon code with  length $n < q + 1$ is known to be $n - k$.

In the case $n = q + 1$, these $q$-ary $[n, k]$ Reed-Solomon codes are referred to as {\it projective Reed-Solomon codes}, denoted as $\mbox{PRS}(k)$. Now let's consider the  projective Reed-Solomon codes. By Lemma \ref{lem3}, we have  $\mbox{PRS}(k)^{\bot}=\mbox{PRS}(q+1-k)$ as in this case  $w_i=-1$ in \eqref{eq4} for all $i$.  This result is derived from the fact that $\prod_{1\leq j\leq n,j\neq i}(\alpha_i-\alpha_j)=-1$, which is precisely the derivative of the polynomial $x^q-x$ at the element $\alpha_i$.  It was informed in \cite{ZWK} that for $k \in \{1, q, q + 1\}$ the covering radius of $\mbox{PRS}(k)$ is  $q - k+1$.  In the case of $k = q - 1$, the covering radius of $\mbox{PRS}(k)$ is $1$. However, for $2 \leq k \leq q - 2$, the covering radius of $\mbox{PRS}(k)$ is currently open and conjectured as follows.  
 \begin{conjecture}\label{conj:1} {\rm \cite[Conjecture I.2 ]{ZWK} For $2\le k\le q-2$, the covering radius of $\mbox{PRS}(k)$ is:
 \begin{eqnarray*}
\begin{cases}  q-k+1 \mbox{~ if } q \mbox{ is even and } k\in \{2,q-2\},\\
q-k  \mbox{ ~~~~~~otherwise}.\\
   \end{cases}
\end{eqnarray*} 
 }
 \end{conjecture}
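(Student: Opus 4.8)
The plan is to convert this covering-radius problem into a question about the existence of long MDS codes, using Theorem \ref{thm6} as the bridge. Recall from the discussion above that $\mathrm{PRS}(k)$ is a $[q+1,k]$ MDS code with $\mathrm{PRS}(k)^\perp = \mathrm{PRS}(q+1-k)$, and that every $[n,k]$ MDS code has covering radius equal to $n-k$ or $n-k-1$ (recalled above from \cite{BGP}). Hence $\rho(\mathrm{PRS}(k)) \in \{q+1-k,\, q-k\}$, and the two values in the conjecture are precisely these; the entire task is to decide which one occurs for each admissible pair $(q,k)$.

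First I would apply Theorem \ref{thm6} (which is stated for an arbitrary MDS code, with no restriction on its length) to the MDS code $\mathcal{C} := \mathrm{PRS}(q+1-k)$, whose dual is $\mathcal{C}^\perp = \mathrm{PRS}(k)$. The theorem states that $\overline{\mathcal{C}}(\bu)$ is MDS if and only if $\rho(\mathrm{PRS}(k)) = q+1-k$ and $\bu$ is a deep hole of $\mathrm{PRS}(k)$. Since a deep hole always exists once the covering radius attains a given value, this produces the clean equivalence: $\rho(\mathrm{PRS}(k)) = q+1-k$ if and only if there is a vector $\bu \in \gf(q)^{q+1}$ for which $\overline{\mathrm{PRS}(q+1-k)}(\bu)$ is MDS. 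As the latter code has length $q+2$ and dimension $q+1-k$, this in turn is equivalent to the existence of a $[q+2,\,q+1-k,\,k+2]$ MDS code (a doubly extended Reed-Solomon-type code). Thus $\rho(\mathrm{PRS}(k))$ equals the larger value $q+1-k$ exactly when $\mathrm{PRS}(q+1-k)$ admits a length-$(q+2)$ MDS extension, and equals $q-k$ otherwise.

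For the two exceptional cases ($q$ even and $k \in \{2,q-2\}$) I would exhibit the extension explicitly. When $k = q-2$ one needs a $[q+2,3]$ MDS code; since $\mathrm{PRS}(3)$ is the conic $(q+1)$-arc in $\PG(2,q)$, adjoining its nucleus yields a $(q+2)$-arc (a hyperoval) for $q$ even, and the corresponding $\bu$ --- the one with $G_3\bu^T$ equal to the nucleus direction --- can be written in closed form, exactly in the spirit of the determinant computations behind Theorems \ref{thm7} and \ref{thm12}. The case $k = 2$ (dimension $q-1$) then follows either by the same argument applied to the dual hyperoval code or directly by duality. This settles $\rho(\mathrm{PRS}(k)) = q+1-k$ in both exceptional cases unconditionally.

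The remaining direction --- proving $\rho(\mathrm{PRS}(k)) = q-k$ in all other cases, i.e. that no length-$(q+2)$ MDS extension exists --- is where I expect the real difficulty to lie. By the reduction above it amounts to ruling out a $[q+2,\,q+1-k]$ MDS code, equivalently a $(q+2)$-arc in $\PG(q-k,q)$; and since deleting a coordinate turns any longer arc into a $(q+2)$-arc, this is precisely the assertion that the maximum length of an MDS code of that dimension is $q+1$, namely the main conjecture for MDS codes in this dimension. That conjecture is a theorem for $q$ prime (Ball) and is known in several further special cases, but it is open in general; so an unconditional proof is available only in those regimes, while the full statement follows from (and is essentially equivalent to) the MDS conjecture. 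A more self-contained route worth trying is to expand the condition ``$\overline{\mathrm{PRS}(q+1-k)}(\bu)$ is MDS'' into explicit Vandermonde determinant conditions on $\bu$, as in the proof of Theorem \ref{thm14}, and to show the resulting system of symmetric-function equations over $\gf(q)$ is unsolvable outside the exceptional cases. This recasts the obstacle as a concrete problem on subset sums and products in $\gf(q)$, but I would expect it to be no easier than the MDS conjecture in the worst case.
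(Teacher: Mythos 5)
You should first be aware that the statement you were given is not a theorem of the paper: it is an open conjecture quoted from \cite{ZWK}, which the paper explicitly leaves unresolved, contributing only the partial result Theorem \ref{thm:conj} together with the remark that Seroussi and Roth \cite{SR} settle the range $k \ge \lfloor (q-1)/2 \rfloor$. So there is no full proof in the paper to compare yours against, and your conclusion that an unconditional proof is out of reach because the general case reduces to the MDS main conjecture is exactly the paper's own stance. As partial progress, your plan runs parallel to the paper's. Your reduction via Theorem \ref{thm6} ($\rho(\mbox{PRS}(k)) = q+1-k$ if and only if $\mbox{PRS}(q+1-k)$ admits an MDS extension $\overline{\mbox{PRS}(q+1-k)}(\bu)$ for some $\bu$, deep holes always existing) is precisely the mechanism behind the paper's Theorems \ref{thm612} and \ref{thm:conj}; your conic-plus-nucleus hyperoval argument for the exceptional cases is the geometric translation of the paper's observation that $\gf(q)$ is a $(q,2,0)$-set (resp.\ a $(q,q-2,0)$-set) if and only if $q$ is even, fed through the Roth--Lempel criterion (Lemma \ref{lem11}) and Theorem \ref{thm12}; and your closing reformulation of the remaining cases as subset-sum conditions is the content of Theorem \ref{thm:conj}(2).

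Two cautions. First, your claimed equivalence between $\rho(\mbox{PRS}(k)) = q+1-k$ and the existence of \emph{some} $[q+2,\,q+1-k,\,k+2]$ MDS code is only one-directional: Theorem \ref{thm6} requires the extending code to be $\overline{\mbox{PRS}(q+1-k)}(\bu)$, i.e.\ an extension of $\mbox{PRS}(q+1-k)$ itself, and an arbitrary $[q+2,q+1-k]$ MDS code need not arise this way. Consequently the MDS main conjecture implies Conjecture \ref{conj:1} outside the exceptional cases, but your phrase ``essentially equivalent to the MDS conjecture'' overstates the converse; ``implied by'' is what your argument justifies. Second, for the exceptional cases $k \in \{2,q-2\}$ you explicitly verify only the $q$-even direction; the other half (covering radius $q-k$ when $q$ is odd) needs the classical fact that $\PG(2,q)$ contains no $(q+2)$-arc for odd $q$, applied to the putative extension or to its dual. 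Your framework delivers this immediately, but it should be stated --- notably, the paper's own proof of Theorem \ref{thm:conj}(1), which only invokes the one-directional Theorem \ref{thm612}, is likewise silent on this direction.
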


 The above conjecture is true if $k \ge\left \lfloor \frac{q-1}{2} \right \rfloor$ from the work of Seroussi and Roth \cite{SR}. The following theorem gives a partial answer to Conjecture \ref{conj:1}.



\begin{theorem}\label{thm:conj}
{\rm
(1) When $ k\in \{2,q-2\} $, then the covering radius of $\mbox{PRS}(k)$ is equal to $q-k+1$  if and only if $q$  is even.

(2) When $3\le k \le q-3$, if the covering radius $\rho(\mbox{PRS}(k))=q-k$, then the set $\gf(q)$ is not a $(q,k,\delta)$-set in $\gf(q)$ for any $\delta\in \gf(q)$, namely for each $\delta\in \gf(q)$, there exist $k$ distinct elements of $\gf(q)$ whose sum is exactly $\delta$.

}
\end{theorem}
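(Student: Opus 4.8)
The plan is to translate the covering radius of $\mathrm{PRS}(k)$ into the sum-free-set language of Theorem~\ref{thm612} and then dispose of the two extremal dimensions by different means. First I would record that $\mathrm{PRS}(k) = \mathcal{C}_k(\mathbf{a},\mathbf{1},\infty)$ with $\mathbf{a}$ ranging over all of $\gf(q)$ (so the base length is $n=q$), and that $\mathrm{PRS}(k)^\perp = \mathrm{PRS}(q+1-k)$ by Lemma~\ref{lem3}. Applying Theorem~\ref{thm612} to $\mathcal{C}_{q+1-k}(\mathbf{a},\mathbf{1},\infty)$ shows that the existence of a $(q,q-k,\delta)$-set forces $\rho(\mathrm{PRS}(q+1-k)^\perp)=q+1-k$, that is, $\rho(\mathrm{PRS}(k))=q+1-k$. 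Since $\sum_{\alpha\in\gf(q)}\alpha=0$ for $q>2$, complementing a $k$-subset shows that a $(q,k,\delta)$-set exists if and only if a $(q,q-k,-\delta)$-set exists; I thus obtain the implication
\begin{equation*}
(\star)\qquad \exists\,\delta:\ \gf(q)\text{ is a }(q,k,\delta)\text{-set}\ \Longrightarrow\ \rho(\mathrm{PRS}(k))=q+1-k.
\end{equation*}

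Part~(2) is then immediate. For a $[q+1,k]$ MDS code the covering radius equals $q+1-k$ or $q-k$ (see \cite{BGP}), so $\rho(\mathrm{PRS}(k))=q-k$ is the same as $\rho(\mathrm{PRS}(k))\neq q+1-k$. The contrapositive of $(\star)$ reads exactly: if $\rho(\mathrm{PRS}(k))=q-k$ then $\gf(q)$ is a $(q,k,\delta)$-set for no $\delta$, i.e. every $\delta\in\gf(q)$ is a sum of $k$ distinct field elements. No further input is needed.

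For Part~(1) I would treat $k=q-2$ and $k=2$ separately. When $k=q-2$ one has $k\ge\lfloor(q-1)/2\rfloor$ for every admissible $q$, so Conjecture~\ref{conj:1} is already a theorem in this range by Seroussi and Roth \cite{SR}; their result yields $\rho(\mathrm{PRS}(q-2))=3=q-k+1$ precisely when $q$ is even, settling both directions at once. When $k=2$ the ``if $q$ even'' direction is a case of $(\star)$: in characteristic two $a+b=0$ forces $a=b$, so $\gf(q)$ is a $(q,2,0)$-set and $\rho(\mathrm{PRS}(2))=q-1$. The remaining and hardest step is to show $\rho(\mathrm{PRS}(2))=q-2$ for $q$ odd. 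I would pass to geometry via Lemma~\ref{lem4}: $\rho(\mathrm{PRS}(2))=q-1$ holds iff the $2\times(q+1)$ generator matrix extends by one row to a $[q+1,3]$ MDS code, equivalently iff the $q+1$ columns lift to a $(q+1)$-arc of $\PG(2,q)$ that projects bijectively from $(0,0,1)$ onto $\PG(1,q)$. For the conic lift $\alpha\mapsto(1,\alpha,\alpha^2)$ the secant through $(1,\alpha,\alpha^2)$ and $(1,\beta,\beta^2)$ meets the line at infinity in $(0,1,\alpha+\beta)$, so the infinite column $(0,1,\delta)$ completes an arc iff $\delta\notin\{\alpha+\beta:\alpha\neq\beta\}$, which for $q$ odd is all of $\gf(q)$; thus the conic lift never works. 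To exclude every lift I invoke the classical theorem of Segre: for $q$ odd every $(q+1)$-arc of $\PG(2,q)$ is a conic, and a bijective projection would force the centre $(0,0,1)$ onto no secant, hence onto all $q+1$ tangents, which concur only in even characteristic (at the nucleus). Hence for $q$ odd no lift exists, so $\rho(\mathrm{PRS}(2))\neq q-1$ and the dichotomy gives $\rho(\mathrm{PRS}(2))=q-2$.

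The main obstacle is exactly this last subcase. Theorem~\ref{thm612} only delivers the one-way implication $(\star)$ (``sum-free set $\Rightarrow$ maximal covering radius''), so for $k=2$ and $q$ odd one cannot rule out $\rho=q-1$ by counting sum-free sets alone and must import an external geometric fact---Segre's classification of ovals, equivalently that a nucleus exists only in even characteristic. For $k=q-2$ the Seroussi--Roth range already covers both parities, and for $k=2$ with $q$ even the implication $(\star)$ suffices, so no additional ideas are required there.
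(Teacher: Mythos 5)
Your proposal is correct, and on part (2) and the ``$q$ even'' half of part (1) it is essentially the paper's argument: both amount to Theorem \ref{thm612} (equivalently, Theorem \ref{thm6} combined with Theorem \ref{thm12} and Lemma \ref{lem11}) together with the complementation identity coming from $\sum_{x\in\gf(q)}x=0$. The genuine difference is in the ``only if'' half of part (1). The paper's proof consists solely of two applications of Theorem \ref{thm612} for even $q$; since Theorem \ref{thm612} is a one-way implication, the non-existence of $(q,2,\delta)$-sets for odd $q$ yields nothing by itself, so the paper never actually shows $\rho(\mathrm{PRS}(k))=q-k$ when $q$ is odd. For $k=q-2$ this is covered by the Seroussi--Roth range $k\ge\lfloor (q-1)/2\rfloor$ quoted before the theorem (exactly as you note), but for $k=2$ and odd $q\ge 7$ that range does not apply and the paper offers no substitute; your arc-theoretic argument supplies precisely this missing step, so your write-up is the more complete proof of the stated equivalence. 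Two remarks on that step. First, Segre's classification is heavier than necessary: if the projection of the $(q+1)$-arc from the centre $(0,0,1)$ is injective, then every line through the centre is a tangent, so the arc together with the centre is a $(q+2)$-arc, and $(q+2)$-arcs (hyperovals) exist only for even $q$ by an elementary parity count --- no classification of ovals is needed. Second, the same contradiction is available inside the paper's own machinery: $\rho(\mathrm{PRS}(2))=q-1$ would, by Theorem \ref{thm6} applied to $\C=\mathrm{PRS}(q-1)$, produce a $[q+2,q-1,4]$ MDS code, whose dual is a $[q+2,3,q]$ MDS code, i.e.\ again a hyperoval. Either way, you prove the full ``if and only if'' claimed in part (1), which the paper's written proof does not.
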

\begin{proof}
(1)  If $k=2$, then $\gf(q)$ is a $(q,2,0)$-set  in $\gf(q)$ if and only if $q$ is even.  Theorem \ref{thm612} tells us that $\rho(\mbox{PRS}(3)^{\bot})=\rho(\mbox{PRS}(q-2))=3=(q+1)-(q-2)$. 
If $k=q-2$, then $\gf(q)$ is a $(q,q-2,0)$-set  in $\gf(q)$ if and only if $q$ is even due to the identity $\sum_{x\in \gf(q)}x=0$. By Theorem \ref{thm612}, $\rho(\mbox{PRS}(q-1)^{\bot})=\rho(\mbox{PRS}(2))=q-1=(q+1)-2$. 


(2)  Suppose that $\rho(\mbox{PRS}(k))=q-k$.  By Theorem \ref{thm6},  $\overline{\text{PRS}(q+1-k)}(\mathbf{u})$ is not an MDS code. 
By Lemma \ref{lem11}, the set $\gf(q)$ is not a $(q,q-k,\delta)$-set in $\gf(q)$ for any $\delta\in \gf(q)$. So $\gf(q)$ is also not a $(q,k,-\delta)$-set.  The result follows from the fact that the argument holds for any arbitrary choice of $\delta$ in $\text{GF}(q)$.
This completes the proof.\end{proof}


From \cite{SR}, Conjecture \ref{conj:1} holds for $k \ge\left \lfloor \frac{q-1}{2} \right \rfloor$. Namely $\rho(\mbox{PRS}(k))=q-k$ when $\left \lfloor \frac{q-1}{2} \right \rfloor\le k<q-2$. We have the following corollary.

\begin{corollary}\label{cor:set}
{\rm
When $\left \lfloor \frac{q-1}{2} \right \rfloor\le k<q-2$, for any $\delta\in \gf(q)$ there exist $k$ distinct elements of $\gf(q)$ whose sum is equal to $\delta$.
}
\end{corollary}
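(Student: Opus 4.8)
The plan is to derive this as an immediate specialization of two results already in hand, so that essentially no fresh computation is needed. The first ingredient is the Seroussi--Roth theorem \cite{SR}, recalled just above the statement, which confirms Conjecture \ref{conj:1} for all $k \ge \lfloor (q-1)/2 \rfloor$. Throughout the stated range $\lfloor (q-1)/2 \rfloor \le k < q-2$ we have $k \neq 2$ (since $\lfloor (q-1)/2 \rfloor \ge 3$ once $q \ge 7$) and $k \neq q-2$ (since $k \le q-3$), so we sit squarely in the ``otherwise'' branch of Conjecture \ref{conj:1}. Consequently $\rho(\mbox{PRS}(k)) = q-k$ for every such $k$.

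Next I would invoke Theorem \ref{thm:conj}(2). First one checks that the index range is compatible: for $q \ge 7$ every $k$ with $\lfloor (q-1)/2 \rfloor \le k \le q-3$ also satisfies $3 \le k \le q-3$, which is exactly the hypothesis of Theorem \ref{thm:conj}(2). Substituting the equality $\rho(\mbox{PRS}(k)) = q-k$ into that theorem then yields at once that $\gf(q)$ is not a $(q, k, \delta)$-set for any $\delta \in \gf(q)$. Unwinding the definition of a $(q,k,\delta)$-set, this is precisely the statement that for every $\delta \in \gf(q)$ there exist $k$ distinct elements of $\gf(q)$ summing to $\delta$, which is the assertion to be proved.

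Because the result is a direct corollary, I do not expect a genuine obstacle; the only step meriting a moment of care is the boundary behaviour in small characteristic. One should verify that the window $\lfloor (q-1)/2 \rfloor \le k < q-2$ is nonempty and embedded in $[3, q-3]$, which holds for $q \ge 7$; for the handful of smaller $q$ the window is either empty or meets the edge value $k = 2$, and those few cases are either vacuous or can be settled by direct inspection (using Theorem \ref{thm:conj}(1) for $k = 2$), so nothing is lost.
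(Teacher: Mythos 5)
Your proposal is correct and follows essentially the same route as the paper: invoke Seroussi--Roth to get $\rho(\mbox{PRS}(k))=q-k$ throughout the range $\left\lfloor \frac{q-1}{2} \right\rfloor \le k < q-2$, then feed this into Theorem \ref{thm:conj}(2) and unwind the definition of a $(q,k,\delta)$-set. Your extra verification that the range embeds in $[3,q-3]$ for $q\ge 7$, together with the direct treatment of the small-$q$ edge cases ($k=1$ for $q=4$, $k=2$ for $q=5$), is a point of care the paper passes over silently, and it makes the argument airtight.
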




 
 





\subsection{The covering radii of two families of MDS codes}
In this subsection, we will determine the covering radii of two families of MDS codes. 
We first 
introduce a class of MDS codes. 
Let $m$ be a positive integer, $q=2^m$, and $\alpha$ be a primitive element of $\gf(q^2)$. 
Define $\beta=\alpha^{q-1}$. Then $\beta$ is a $(q+1)$-th primitive root of unity in $\gf(q^2)$. Let $\m_{\beta^i}(x)$ denote 
the minimal polynomial of $\beta^i$ over $\gf(q)$. Clearly, $\m_{\beta^0}(x)=x-1$ and 
$$ 
\m_{\beta^i}(x)=(x-\beta^i)(x-\beta^{q+1-i})=(x-\beta^i)(x-\beta^{-i}) 
$$  
for all $1 \leq i \leq q$. 
For each $u$ with $1 \leq u \leq q/2$, define 
\begin{eqnarray}
g_u(x)=\m_{\beta^u}(x) \cdots \m_{\beta^{q/2}}(x).  
\end{eqnarray}
Let $\C_u$ be the cyclic code of length $q+1$ over $\gf(q)$ with generator polynomial $g_u(x)$. As pointed in \cite{SD}, here the code $\C_u$ and its dual code are both not Reed-Solomon codes. We have the following well-known result \cite{MS,SD}. 

\begin{lemma}\label{thm-sdjoin1}
{\rm \cite[Theorem 7]{SD}
Let $m \geq 2$. For each $u$ with $1 \leq u \leq q/2$, $\C_u$ is a $[q+1, 2u-1, q-2u+3]$ MDS cyclic code. 
} 
\end{lemma}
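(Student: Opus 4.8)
The plan is to read off the parameters of $\C_u$ directly from the root set of its generator polynomial $g_u(x)$, and then to combine the BCH bound with the Singleton bound to pin down the minimum distance exactly. First I would record the cyclotomic structure over $\gf(q)$. Since $\beta$ has order $q+1$ and $q \equiv -1 \pmod{q+1}$, the Frobenius $x \mapsto x^q$ sends $\beta^i$ to $\beta^{-i}$, so the $q$-conjugates of $\beta^i$ are precisely $\{\beta^i, \beta^{-i}\}$. Because $q=2^m$ is even, $q+1$ is odd, so $\beta^i = \beta^{-i}$ would force $(q+1)\mid i$, which is impossible for $1 \le i \le q/2$; hence each $\m_{\beta^i}(x)$ in that range is the genuine quadratic $(x-\beta^i)(x-\beta^{-i})$, and these conjugate pairs are pairwise disjoint for distinct $i$. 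It follows that $g_u(x)=\m_{\beta^u}(x)\cdots\m_{\beta^{q/2}}(x)$ has degree $2(q/2-u+1)=q-2u+2$ and divides $x^{q+1}-1$ (all its roots being $(q+1)$-th roots of unity), so $\C_u$ is a cyclic code of length $q+1$ and dimension $k=(q+1)-(q-2u+2)=2u-1$, as claimed.

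Next I would identify the defining set. The exponents of the roots of $g_u$ are $\{u,u+1,\ldots,q/2\}$ together with their negatives modulo $q+1$. Writing $-i \equiv q+1-i$, the reflected block is $\{q/2+1,\ldots,q+1-u\}$, so the full exponent set is the run of consecutive integers $\{u,u+1,\ldots,q+1-u\}$, of size $q+2-2u$. This gluing is the heart of the argument: the two blocks fuse into a single interval precisely because the construction pairs $i$ with $q+1-i$ and the indices $u,\ldots,q/2$ abut their reflections exactly at the midpoint, with no overlap and no gap. Having $q+2-2u$ consecutive roots, the BCH bound yields $d \ge (q+2-2u)+1 = q-2u+3$.

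Finally, since $\C_u$ is an $[q+1,\,2u-1]$ code, the Singleton bound gives $d \le (q+1)-(2u-1)+1 = q-2u+3$. The two inequalities force $d = q-2u+3$, so $\C_u$ meets the Singleton bound and is MDS, completing the proof. I expect the only delicate step to be verifying that the index block and its reflection assemble into an uninterrupted interval of consecutive exponents; once that is checked, the BCH and Singleton bounds close the argument routinely, and the parity $q+1$ odd is exactly what guarantees each factor in $g_u$ is quadratic and the midpoint matching is clean.
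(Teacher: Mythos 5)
Your proof is correct. Note that the paper itself does not prove this lemma at all --- it is quoted from the reference [SD, Theorem 7] (Sun and Ding) --- so there is no in-paper argument to compare against; your route (Frobenius conjugacy pairs $\{\beta^i,\beta^{-i}\}$, the gluing of the exponent block $\{u,\ldots,q/2\}$ with its reflection $\{q/2+1,\ldots,q+1-u\}$ into one consecutive interval, then BCH bound matched against the Singleton bound) is exactly the standard argument for results of this type, and every step checks out, including the delicate points you flag: $q+1$ odd guarantees each $\m_{\beta^i}$ is a genuine quadratic, and $i+j\le q<q+1$ guarantees the conjugate pairs are disjoint.
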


The next lemma was proved in \cite{SD}. 
 
 \begin{lemma}\label{thm-sdjoint2}
 {\rm  \cite[Corollary 17]{SD} Let $m \geq 2$. Then 
$\overline{\C_2}(\bone)$ is a $[q+2, 3, q]$ MDS code over $\gf(q)$ with weight enumerator 
$$ 
1 + \frac{(q+2)(q^2-1)}{2} z^q + \frac{q(q-1)^2}{2} z^{q+2}. 
$$ 
} 
\end{lemma}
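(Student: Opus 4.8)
The plan is to reduce the whole statement to a single minimum-distance computation and then invoke the uniqueness of the weight distribution of an MDS code. By Lemma \ref{thm-sdjoin1} (with $u=2$), $\C_2$ is a $[q+1,3,q-1]$ MDS code, so by the definition in \eqref{eq1} the extended code $\overline{\C_2}(\bone)$ has parameters $[q+2,3,\overline{d}]$ with $\overline{d}\in\{q-1,q\}$. Once I show $\overline{d}=q$, the code meets the Singleton bound and is therefore $[q+2,3,q]$ MDS; since the weight distribution of an MDS code is uniquely determined by $n$, $k$ and $q$, it only remains to check that the closed-form MDS weight distribution for a $[q+2,3,q]$ code equals $1+\frac{(q+2)(q^2-1)}{2}z^q+\frac{q(q-1)^2}{2}z^{q+2}$, a routine substitution (which also yields $A_{q+1}=0$).

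Thus the whole task is to prove $\overline{d}=q$. A codeword of $\overline{\C_2}(\bone)$ arising from $\bc=(c_0,\dots,c_q)\in\C_2$ has weight $\wt(\bc)$ if $\sum_j c_j=0$ and $\wt(\bc)+1$ otherwise. Since the minimum weight of $\C_2$ is $q-1$, the extension can have minimum distance below $q$ only if some weight-$(q-1)$ codeword of $\C_2$ has zero coordinate sum. Hence it suffices to show that every codeword of $\C_2$ with $\sum_j c_j=0$ has weight $0$ or $q$, never $q-1$.

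To pin down these codewords I would use the transform (Mattson--Solomon) description of $\C_2$. Its generator polynomial has roots $\beta^{\pm2},\dots,\beta^{\pm q/2}$, so in the frequency domain a codeword is supported only on the exponents $\{0,1,-1\}$; imposing that the codeword lie in $\gf(q)^{q+1}$ forces the shape $c_j=b+\tr(a\beta^{j})$ with $a\in\gf(q^2)$ and $b\in\gf(q)$, where $\tr$ is the trace from $\gf(q^2)$ to $\gf(q)$. Because $q+1$ is odd and $\sum_{j=0}^{q}\beta^{j}=0$, the coordinate sum equals $b$, so the zero-sum codewords are exactly those with $b=0$, namely $c_j=\tr(a\beta^{j})$. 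In characteristic two $\tr(x)=0$ iff $x\in\gf(q)$, and since $\gcd(q-1,q+1)=1$ the group $\gf(q^2)^{*}$ is the internal direct product $\gf(q)^{*}\times U$, where $U$ is the cyclic group of $(q+1)$-th roots of unity generated by $\beta$. For $a\neq0$ the coset $aU=\{a\beta^{j}\}$ therefore meets $\gf(q)^{*}$ in exactly one point, so $\tr(a\beta^{j})$ vanishes for exactly one $j$ and the codeword has weight $q$; for $a=0$ it is the zero codeword. This establishes the claim, hence $\overline{d}=q$.

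The main obstacle is this third step: correctly reading off the representation $c_j=b+\tr(a\beta^{j})$ from the defining set and, above all, the counting that a multiplicative coset of $U$ hits $\gf(q)^{*}$ exactly once (equivalently, that $\{\tr(a\beta^{j})\}_{a}$ is a one-weight $[q+1,2,q]$ code). Everything else --- the reduction, the observation that the appended coordinate equals $b$, and the final substitution into the MDS weight formula --- is bookkeeping. I note in passing that the same conclusion follows from Theorem \ref{thm6}: proving $\overline{d}=q$ is equivalent to showing $\rho(\C_2^{\bot})=3$ and that $\bone$ is a deep hole of $\C_2^{\bot}$, but the direct argument above seems the most economical.
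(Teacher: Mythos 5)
Your proof is correct, but there is no internal argument in the paper to compare it against: the paper does not prove this lemma at all, it imports it as a citation of \cite[Corollary 17]{SD}. Your argument is therefore a genuinely independent, self-contained proof, and each step checks out. The reduction is sound: a codeword of $\C_2$ of weight $w$ and coordinate sum $s$ extends to weight $w$ if $s=0$ and $w+1$ otherwise, so since $\C_2$ is $[q+1,3,q-1]$ MDS (Lemma \ref{thm-sdjoin1}), $\overline{d}<q$ exactly when some weight-$(q-1)$ codeword has zero sum. The trace description is right: the defining set of $\C_2$ is $\{2,3,\ldots,q-1\} \pmod{q+1}$, so the nonzeros are the exponents $\{0,\pm 1\}$ and the map $(a,b)\mapsto \big(b+\tr(a\beta^j)\big)_{j=0}^{q}$ with $a\in\gf(q^2)$, $b\in\gf(q)$ is injective (if $a\neq 0$ and $\tr(a\beta^j)=0$ for all $j$, then $\beta\in\gf(q)$, impossible since $\beta$ has order $q+1$) onto all $q^3$ codewords. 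Since $q+1$ is odd and $\sum_{j=0}^{q}\beta^j=0$, the appended coordinate equals $b$; in characteristic two $\tr(x)=0$ iff $x\in\gf(q)$, and the decomposition $\gf(q^2)^*=\gf(q)^*\times U$ (valid as $\gcd(q-1,q+1)=1$ for even $q$) shows each coset $aU$ with $a\neq 0$ meets $\gf(q)$ exactly once, so every nonzero zero-sum codeword has weight exactly $q$. Hence $\overline{d}=q$, the code is MDS, and the enumerator follows from the uniqueness of MDS weight distributions; the substitution indeed gives $A_q=\binom{q+2}{2}(q-1)=\frac{(q+2)(q^2-1)}{2}$, $A_{q+1}=(q+2)\big((q^2-1)-(q+1)(q-1)\big)=0$, and $A_{q+2}=\frac{q(q-1)^2}{2}$, which together with the zero codeword sum to $q^3$. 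Two remarks on the comparison: first, your passing alternative via Theorem \ref{thm6} would be circular inside this paper, since the paper derives $\rho(\C_2^\perp)=3$ and the deep-hole property of $\bone$ (Theorem \ref{thm-new1}) \emph{from} this lemma, so your direct argument is not merely economical but the only non-circular option here; second, your proof has the added structural benefit of identifying the zero-sum subcode as the one-weight $[q+1,2,q]$ code $\{(\tr(a\beta^j))_j : a\in\gf(q^2)\}$, which is precisely why the extension gains distance.
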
 

Combining Theorem \ref{thm6} and Lemma \ref{thm-sdjoint2} yields the following result. 

\begin{theorem}\label{thm-new1}
{\rm 
The covering radius $\rho(\C_2^\perp)=3$ and $\bone$ is a deep hole of $\C_2^\perp$. 
} 
\end{theorem}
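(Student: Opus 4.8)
The plan is to apply Theorem \ref{thm6} directly, with $\C$ taken to be $\C_2^\perp$ rather than $\C_2$ itself, since Theorem \ref{thm6} relates the MDS-ness of an extended code $\overline{\C}(\bu)$ to the covering radius of the \emph{dual} of $\C$. Concretely, by Lemma \ref{thm-sdjoint2} we know that $\overline{\C_2}(\bone)$ is a $[q+2,3,q]$ MDS code. Since $\C_2$ is a $[q+1,3,q-1]$ MDS code (this is Lemma \ref{thm-sdjoin1} with $u=2$, giving dimension $2\cdot 2 - 1 = 3$ and minimum distance $q-2\cdot2+3 = q-1$), I can invoke Theorem \ref{thm6} with $\C = \C_2$. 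The theorem states that $\overline{\C_2}(\bone)$ is MDS if and only if $\rho(\C_2^\perp) = k = 3$ and $\bone$ is a deep hole of $\C_2^\perp$.

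First I would verify the hypotheses of Theorem \ref{thm6}: that $\C_2$ is an $[n,k]$ MDS code with $n = q+1$ and $k = 3$, which follows from Lemma \ref{thm-sdjoin1}. Then, since Lemma \ref{thm-sdjoint2} asserts that $\overline{\C_2}(\bone)$ is a $[q+2,3,q]$ code, I would check that these are exactly the parameters of an MDS extension: an MDS $[q+2,3]$ code has minimum distance $(q+2)-3+1 = q$, which matches. Hence $\overline{\C_2}(\bone)$ is indeed MDS. Applying the ``only if'' direction of Theorem \ref{thm6} (the forward implication, taking $\bu = \bone$), the MDS-ness of $\overline{\C_2}(\bone)$ forces both $\rho(\C_2^\perp) = k = 3$ and $\bone$ to be a deep hole of $\C_2^\perp$ simultaneously. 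This yields both conclusions of Theorem \ref{thm-new1} at once.

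The argument is essentially a one-line deduction once the bookkeeping of parameters is in place, so there is no serious obstacle here; the substantive content was already established in Lemma \ref{thm-sdjoint2}, whose proof (in \cite{SD}) computes the full weight enumerator and thereby certifies MDS-ness. The only point requiring a moment of care is the \emph{direction} in which Theorem \ref{thm6} is applied: the theorem is phrased in terms of the dual of the base code, so one must keep straight that extending $\C_2$ by $\bone$ probes the covering radius of $\C_2^\perp$, not of $\C_2$. I would make this explicit to avoid confusion, noting that $\bone$ plays the role of $\bu$ and $\C_2$ plays the role of $\C$. With that identification fixed, the equivalence in Theorem \ref{thm6} delivers the claim immediately.
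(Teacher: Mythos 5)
Your proposal is correct and is essentially the paper's own proof: the paper likewise obtains Theorem \ref{thm-new1} by combining Theorem \ref{thm6} (applied with $\C=\C_2$, $\bu=\bone$, $k=3$) with Lemma \ref{thm-sdjoint2}, which certifies that $\overline{\C_2}(\bone)$ is a $[q+2,3,q]$ MDS code. One small wording issue: your opening sentence says you take $\C$ to be $\C_2^\perp$, which contradicts the (correct) identification $\C=\C_2$ that you make later; the conclusion of Theorem \ref{thm6} concerns $\C_2^\perp$, but the code being extended is $\C_2$.
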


The next lemma was proved in \cite{SD}.

\begin{lemma}\label{thm-sdjoint18}
{\rm \cite[Theorem 49]{SD} 
Let $m \geq 2$. Then $\overline{\C_{q/2}}(\bone)$ is a $[q+2, q-1, 4]$ MDS code over $\gf(q)$.
} 
\end{lemma}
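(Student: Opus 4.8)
The plan is to prove the lemma directly, by showing that extending $\C_{q/2}$ by $\bone$ raises the minimum distance from $3$ to $4$, rather than invoking Theorem \ref{thm6} (which would be circular, since the natural use of this lemma is to \emph{deduce} the covering radius of $\C_{q/2}^\perp$ afterwards). By Lemma \ref{thm-sdjoin1} with $u=q/2$, the code $\C_{q/2}$ is $[q+1,q-1,3]$, so $\overline{\C_{q/2}}(\bone)$ has length $q+2$, dimension $q-1$, and minimum distance $\overline d\in\{3,4\}$; since $(q+2)-(q-1)+1=4$, being MDS is exactly $\overline d=4$. A word of the extension is $(c_0,\dots,c_q,\sum_i c_i)$ with $(c_0,\dots,c_q)\in\C_{q/2}$, and its weight is the weight of $(c_0,\dots,c_q)$ or one more. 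Since extending never decreases weight and $\C_{q/2}$ has minimum weight $3$, we get $\overline d\ge 3$, and a word of weight exactly $3$ can only arise from a weight-$3$ codeword $c$ of $\C_{q/2}$ whose coordinate sum $\sum_i c_i$ vanishes. So it suffices to show that every weight-$3$ codeword of $\C_{q/2}$ has nonzero coordinate sum (this in particular forces $\bone\notin\C_{q/2}^\perp$, so the dimension really is $q-1$).

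Next I would describe $\C_{q/2}$ through roots of unity. Put $\gamma=\beta^{q/2}$. As $q+1$ is odd and $q/2=2^{m-1}$, we have $\gcd(q/2,q+1)=1$, so $\gamma$ is again a primitive $(q+1)$-th root of unity and $\{\gamma^i:0\le i\le q\}=\mu_{q+1}$. The generator polynomial is $\m_{\beta^{q/2}}(x)=(x-\gamma)(x-\gamma^{-1})$, so $c=(c_0,\dots,c_q)\in\C_{q/2}$ iff $c(\gamma)=c(\gamma^{-1})=0$ with $c(x)=\sum_i c_i x^i$. Because the $c_i$ lie in $\gf(q)$ and $\gamma^{-1}=\gamma^q$, one checks $c(\gamma^{-1})=c(\gamma)^q$, so the second condition is redundant and over $\gf(q^2)$ the code is just $\{c\in\gf(q)^{q+1}:\sum_i c_i\gamma^i=0\}$. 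Thus a weight-$3$ codeword corresponds to three distinct $\eta_1,\eta_2,\eta_3\in\mu_{q+1}$ and scalars $a,b,c\in\gf(q)^*$ with $a\eta_1+b\eta_2+c\eta_3=0$, and the target reduces to $a+b+c\neq 0$.

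The core step is a short Frobenius argument. Suppose to the contrary $a+b+c=0$. In characteristic $2$ this gives $c=a+b$, and substituting into $a\eta_1+b\eta_2+c\eta_3=0$ yields $a(\eta_1+\eta_3)=b(\eta_2+\eta_3)$; distinctness makes both sides nonzero, so $\lambda:=(\eta_1+\eta_3)/(\eta_2+\eta_3)=b/a\in\gf(q)^*$. Applying $z\mapsto z^q$ and using $\eta_j^q=\eta_j^{-1}$, a direct computation gives $\lambda^q=\lambda\cdot(\eta_2/\eta_1)$. But $\lambda\in\gf(q)$ means $\lambda^q=\lambda$, which forces $\eta_1=\eta_2$, a contradiction. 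Hence no weight-$3$ codeword has vanishing coordinate sum, $\overline d=4$, and $\overline{\C_{q/2}}(\bone)$ is $[q+2,q-1,4]$ MDS.

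I expect the only real obstacle to be the bookkeeping in the second paragraph: collapsing the two conjugate zeros $\gamma,\gamma^{-1}$ to a single $\gf(q^2)$-linear parity check, and thereby identifying weight-$3$ codewords with vanishing $\gf(q)$-combinations of three distinct $(q+1)$-th roots of unity. Once that identification is secured, the characteristic-$2$ elimination and the computation $\lambda^q=\lambda\,\eta_2/\eta_1$ are routine and the contradiction $\eta_1=\eta_2$ is immediate.
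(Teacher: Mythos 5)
Your proof is correct, but there is nothing internal to compare it against: the paper does not prove this lemma at all, it quotes it from \cite[Theorem 49]{SD}. Your argument is therefore a genuinely self-contained alternative to an external citation, and your instinct about circularity is right --- within this paper the lemma exists only to feed Theorem \ref{thm6} and yield $\rho(\C_{q/2}^\perp)=q-1$ (Theorem \ref{thm-new2}), so deducing the lemma from Theorem \ref{thm6} would require already knowing that covering radius. Every step of your route checks out: $g_{q/2}(x)=(x-\gamma)(x-\gamma^{-1})$ with $\gamma=\beta^{q/2}$ again of order $q+1$ (since $\gcd(q/2,q+1)=1$); $c(\gamma^{-1})=c(\gamma)^q$ for vectors over $\gf(q)$, so indeed $\C_{q/2}=\{c\in\gf(q)^{q+1}:\sum_i c_i\gamma^i=0\}$; a weight-$3$ word of $\overline{\C_{q/2}}(\bone)$ can only come from a weight-$3$ word of $\C_{q/2}$ (minimum distance $3$ by Lemma \ref{thm-sdjoin1}) with zero coordinate sum; and in characteristic $2$ such a word gives $\lambda=(\eta_1+\eta_3)/(\eta_2+\eta_3)=b/a\in\gf(q)^*$, while $\eta_j^q=\eta_j^{-1}$ yields $\lambda^q=\lambda\cdot\eta_2/\eta_1$, forcing $\eta_1=\eta_2$, a contradiction; Singleton then gives $\overline{d}=4$, hence MDS. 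What your proof buys is independence from \cite{SD}; what the citation buys the authors is brevity, since the lemma is auxiliary to their main theme. One small correction: the parenthetical claim that this analysis is what makes the dimension ``really'' $q-1$ is off --- the map $c\mapsto(c,\sum_i c_i)$ is injective for every extending vector, so $\dim\overline{\C}(\bu)=\dim\C$ unconditionally; what $\bone\notin\C_{q/2}^\perp$ rules out is triviality of the extension (last coordinate identically zero), not a dimension drop.
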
 

Combining Theorem \ref{thm6} and Lemma \ref{thm-sdjoint18} yields the following result. 

\begin{theorem}\label{thm-new2}
{\rm 
The covering radius $\rho(\C_{q/2}^\perp)=q-1$ and $\bone$ is a deep hole of $\C_{q/2}^\perp$. 
} 
\end{theorem}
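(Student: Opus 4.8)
The plan is to apply Theorem \ref{thm6} directly, since the statement of Theorem \ref{thm-new2} is exactly what that theorem delivers once we feed it the right input. Theorem \ref{thm6} says that for an $[n,k]$ MDS code $\C$, the extended code $\overline{\C}(\bu)$ is MDS if and only if $\rho(\C^\perp)=k$ and $\bu$ is a deep hole of $\C^\perp$. I would instantiate this with $\C=\C_{q/2}$, which by Lemma \ref{thm-sdjoin1} is a $[q+1,\,q-1,\,3]$ MDS code over $\gf(q)$ (taking $u=q/2$, so $2u-1=q-1$ and $q-2u+3=3$), and with the vector $\bu=\bone$ of length $q+1$.

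First I would record that $\C_{q/2}$ has length $n=q+1$ and dimension $k=q-1$. Next, the crucial input is Lemma \ref{thm-sdjoint18}, which asserts that $\overline{\C_{q/2}}(\bone)$ is a $[q+2,\,q-1,\,4]$ code over $\gf(q)$. This code is MDS, since for length $q+2$ and dimension $q-1$ the Singleton bound gives minimum distance at most $(q+2)-(q-1)+1=4$, and the stated distance is exactly $4$. So $\overline{\C_{q/2}}(\bone)$ is indeed an MDS code. I would then invoke the ``only if'' direction of Theorem \ref{thm6}: because $\overline{\C_{q/2}}(\bone)$ is MDS, it follows at once that $\rho(\C_{q/2}^\perp)=k=q-1$ and that $\bone$ is a deep hole of $\C_{q/2}^\perp$. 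This is precisely the assertion of the theorem, so the proof closes in one line after the bookkeeping is set up.

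There is essentially no genuine obstacle here; the work has all been done in the lemmas, and the only things to verify are the numerics. The one point requiring a moment's care is confirming that the parameters line up correctly: that $k=q-1$ really is the dimension of $\C_{q/2}$ (so that the covering radius conclusion reads $\rho(\C_{q/2}^\perp)=q-1$ rather than some other value), and that the $[q+2,q-1,4]$ code meets the Singleton bound with equality so that ``MDS'' is legitimately triggered. Both are immediate from Lemma \ref{thm-sdjoin1} and the Singleton bound, respectively. I would write the proof to parallel the structure of Theorem \ref{thm-new1}, namely a single sentence combining Theorem \ref{thm6} with Lemma \ref{thm-sdjoint18}, since the logical skeleton is identical; the only difference is substituting the parameters of $\C_{q/2}$ for those of $\C_2$.
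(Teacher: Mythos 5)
Your proof is correct and follows exactly the paper's approach: the paper likewise obtains this result by combining Theorem \ref{thm6} (the ``only if'' direction) with Lemma \ref{thm-sdjoint18}, just as you do. Your added bookkeeping --- checking that $\C_{q/2}$ has dimension $2(q/2)-1=q-1$ and that the $[q+2,q-1,4]$ code meets the Singleton bound --- is exactly the verification the paper leaves implicit.
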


\section{Summary and concluding remarks}\label{sec-final}

The main contributions of this paper are summarized as follows. 

\begin{itemize}
\item  For an MDS $[n,k]$ code $\C$, we proved that its extended code $\overline{\C}(\bu)$ for a vector $\bu$ 
 is still MDS if and only if  $\rho(\mathcal{C}^{\bot})=k$ and and $\bu$ is a deep hole of the dual code $\C^\perp$ (see Theorem \ref{thm6}). 
 This is the main result of this paper. 



\item With the help of known results on deep holes of the GRS codes, we determined all vectors $\bu$ such that 
the extended code $\overline{\C}(\bu)$ of the GRS code $\C$ is still MDS in some cases (see Corollary \ref{cor10}).  

\item With the help of known results on deep holes of the GRS codes, we determined some vectors $\bu$ such that the extended code $\overline{\C}(\bu)$ of the EGRS code $\C$ is still MDS in some cases (see Theorem \ref{thm14}). 

\item We determined the covering radius of $\mathcal{C}_k({\bf a}, {\bf 1},\infty)^\perp$ under certain conditions (see Theorem \ref{thm612}). 
\item We proved that the covering radius of  $\mbox{PRS}(k)$ is equal to $q-k+1$ when $q$  is even and $ k\in \{2,q-2\} $ (see Theorem \ref{thm:conj}). 
This provides a partial answer to Conjecture \ref{conj:1}.

\item We settled the covering radii of two families of MDS cyclic codes (see Theorems \ref{thm-new1} and \ref{thm-new2}). 
\end{itemize}

It was proved in \cite{SDC} that every linear code $\C'$ of length $n+1$ with minimum distance 
at least 2 is permutation-equivalent to an extended code $\overline{\C}(\bu)$ for some vector $\bu$ and some 
linear code $\C$ of length $n$ over the same finite field. This shows the importance of the extending techniques in 
coding theory. However, the determination of the minimum distance of $\overline{\C}(\bu)$ is very hard in most cases, 
even if the weight distribution of $\C$ is known. For example, the minimum distance of $\overline{\C}(\bu)$ is 
either 3 or 4 when $\C$ is a Hamming code over $\gf(q)$ for $q>2$. But it is open for which $\bu$'s the minimum 
distance  of the extended nonbinary Hamming code is exactly 4 \cite{SDC}. This hardness may explain why there are very limited results on extended linear codes in the literature.    

Given an MDS $[n,k]$ code $\C$ over $\gf(q)$ and a vector $\bu \in \gf(q)^n$, $\overline{\C}(\bu)$ is an MDS code if and only $\rho(\mathcal{C}^{\bot})=k$ and $\bu$ is a deep hole of $\C^\perp$ according to Theorem \ref{thm6}. However, determining the  deep holes of MDS codes is a very hard problem. Hence, Open Problem 1 for MDS codes is very hard in general. For instance, it is open if $\overline{\C}(-\bone)$ is MDS for many MDS codes $\C$ in a family of MDS codes \cite{SD}.  
Determining the covering radii of MDS codes is also a hard problem in general.   
It would be good if the conjectures presented in this paper could be proved or disproved.

\end{document}